\long\def\@makecaption#1#2{%
  \vskip\abovecaptionskip\footnotesize
  \sbox\@tempboxa{#1. #2}%
  \ifdim \wd\@tempboxa >\hsize
    #1. #2\par
  \else
    \global \@minipagefalse
    \hb@xt@\hsize{\hfil\box\@tempboxa\hfil}%
  \fi
  \vskip\belowcaptionskip}
\newcommand{\todo}[1][\null]{\ensuremath{\clubsuit}}
\newcommand{\noprint}[1]{}
\newcommand{\checked}[1][\null]{\ensuremath{\boldsymbol{\surd}}}
\newcommand{\red}[1]{{#1}}
\newcommand{\eps}{\varepsilon}
\newcommand{\p}{\partial}
\newcommand{\const}{\mathop{\rm const}\nolimits}
\newtheorem{theorem}{Theorem}[section]
{\theoremstyle{definition}
\newtheorem{example}{Example}[section]
\newtheorem{remark}{Remark}[section]
}
\begin{document}

\title{
Well-balanced mesh-based and meshless schemes for the shallow-water equations
}
\author{ Alexander Bihlo \and Scott MacLachlan}


\maketitle

\begin{abstract}
We formulate a general criterion for the exact preservation of the ``lake at rest'' solution in general mesh-based and meshless numerical schemes for the strong form of the shallow-water equations with bottom topography. The main idea is a careful mimetic design for the spatial derivative operators in the momentum flux equation that is paired with a compatible averaging rule for the water column height arising in the bottom topography source term. We prove consistency of the mimetic difference operators analytically and demonstrate the well-balanced property numerically using finite difference and RBF-FD schemes in the one- and two-dimensional cases.
\end{abstract}

\section{Introduction}

The shallow-water equations are a central model in geophysical fluid dynamics that is extensively used in the numerical simulation of propagating long waves such as tsunamis. A peculiarity of the shallow-water equations as used in ocean modeling is the presence of a source term arising due to a non-flat ocean bottom topography. An important exact steady-state solution of the shallow-water equations in this context is the \textit{lake at rest solution}, i.e.\ that the total water height over arbitrary bottom topography remains constant and flat in the absence of a horizontal velocity field. From the numerical point of view, exactly preserving the lake at rest solution is usually the first benchmark for the quality of a discretization of the shallow-water equations. Since essentially all small amplitude wave solutions of the shallow-water equations can be regarded as perturbations of the lake at rest solution, the importance of preserving the lake at rest solution exactly cannot be overestimated, as it avoids the occurrence of spurious numerical waves that can render wholly inaccurate computed solutions. Numerical schemes that can preserve the lake at rest solution are called \textit{well-balanced}.
Designing such well-balanced numerical schemes for the shallow-water equations has attracted extensive interest in the literature, in particular in the finite-volume and discontinuous Galerkin methods communities. Examples of well-balanced numerical schemes are reported in, e.g.,~\cite{audu04a,gall03a,kurg07a,leve98a,vate15a}.

Besides finite-volume methods, several nodal-based discretization methodologies approximate derivatives of the field functions at a given point as linear combinations of certain weights with the field functions evaluated at nearby points. Examples for such methods include classical finite differences, meshless finite differences (where the weights are found using polynomial interpolation) and radial basis function based finite differences (RBF-FD; where the weights are found using radial basis function interpolation). It is this framework of nodal-based discretization for the shallow-water equations that we are interested in here.

Several numerical schemes for the shallow-water equations have been constructed over the past 20 years within the framework of the RBF methodology, see e.g.~\cite{hon99a,wong02a,zhou04a}, but to the best of our knowledge none of these papers has explicitly studied the well-balanced properties of the constructed schemes. It was pointed out in~\cite{xia13a} within the framework of a smoothed-particle hydrodynamics scheme that preserving the lake at rest solution in meshless approximations to the shallow-water equations is a nontrivial endeavor. We show in the present paper that constructing well-balanced schemes for the shallow-water equations requires a careful design for the spatial derivatives in the momentum flux equations that is paired with a compatible averaging rule for the water column height arising in the momentum flux source terms. While mostly focusing on finite difference and RBF-FD derivative approximations, the derived conditions are applicable to any nodal-based derivative approximation for the shallow-water equations in the strong form. 

Numerical conservation of physical properties, such as mass, momentum,
and energy, has received significant attention in recent years,
including many contributions to the literature of so-called mimetic
discretizations; see, for example \cite{boch06Ay, hyma99a, brez05a}
and the references therein.  For grid-based schemes, generalizations
of finite-volume (or staggered finite-difference) approaches have been
applied to the shallow-water equations in \cite{vant12a, vanr11a},
yielding schemes that conserve mass, momentum, and energy.  Similar
techniques have also been applied to discretizations in Lagrangian
formulations \cite{cara98, fran03, dubi10}, with conservation of
potential vorticity for the smoothed-particle hydrodynamics
discretization of the shallow-water equations shown in \cite{fran03}.
Very little work has been done, however, in terms of combining the
mimetic methodology with general meshless methods, such as RBF-FD
discretizations.  The work presented here can be seen as a necessary
first step in such development, although much more research is needed
in this direction to see if a similarly broad class of mimetic
meshless schemes can be realized.

The further organization of this paper is as follows. In Section~\ref{sec:WellBalancedShallowWater}, we present the rigorous theoretical analysis underlying well-balanced nodal derivative approximations for the shallow-water equations along with several examples of such well-balanced schemes, both for mesh-based finite differences and meshless RBF-FD approximations. Section~\ref{sec:NumericalSimulationsShallowWater} contains numerical simulations for well-balanced mesh-based and meshless schemes for the one-dimensional shallow-water equations as well as meshless schemes for the two-dimensional shallow-water equations. Section~\ref{sec:ConclusionsShallowWater} is devoted to the conclusions of this work.

\section{Well-balanced shallow-water equation discretizations}\label{sec:WellBalancedShallowWater}

In this section, we review the shallow-water equations together with the lake at rest solution. We then proceed to introduce the general formalism for finding derivative approximations using weighted nodal-based approximations. Within this framework, we then derive general criteria for obtaining well-balanced discretization schemes for the shallow-water equations.

\subsection{The shallow-water equations}

The shallow-water equations with variable bottom topography are given by the transport equations for mass and momentum in the following form~\cite{pedl87Ay},
\begin{equation}\label{eq:ShallowWaterEqs}
 \boldsymbol{\rho}_t+\mathbf{F}_x+\mathbf{G}_y=\mathbf{S},
\end{equation}
where $\rho=(h,hu,hv)^{\rm T}$ is the vector of mass and momentum, $\mathbf{F}=(hu,hu^2+h^2/2,huv)^{\rm T}$ and $\mathbf{G}=(hv,huv,hv^2+h^2/2)^{\rm T}$ are the flux vectors, and $\mathbf{S}=(0,-hb_x,-hb_y)^{\rm T}$ is the source term. Here $h=h(t,x,y)$ denotes the depth of a water column of constant density, $(u,v)^{\rm T}=(u(t,x,y),v(t,x,y))^{\rm T}$ is the (horizontal) vector of vertically averaged fluid velocity and $b=b(x,y)$ is the prescribed bottom topography. Here and in the following, partial derivatives with respect to the independent variables $t$, $x$ and $y$ are denoted by subscripts. Note that for notational simplicity we apply the scaling $g=1$, i.e.\ the gravitational constant is set to one. 

The lake at rest solution is the steady state solution of~\eqref{eq:ShallowWaterEqs} given by
\[
 u=v=0,\quad h+b=\const.
\]
It states that in the absence of horizontal motion, the total height
of the water column and bottom topography over every point in the
spatial domain is constant and independent of time. While it is
usually straightforward to numerically preserve this steady state
solution in the case of flat topography, $b=0$, arbitrary sea bottom elevations are notoriously challenging to handle for typical shallow-water discretization schemes.

\subsection{Computation of weights in nodal-based derivative approximations}\label{subsec:DerivativeApproximations}

The discretization framework we are interested in here is a slight generalization of the framework usually used for conventional finite difference approximations, see e.g.~\cite{forn11b,forn15a,forn15b} for further details and a more in-depth discussion. Suppose we are given $n$ points $x_1<x_2<\cdots <x_n$ covering the (one-dimensional) spatial domain~$\Omega=[x_1,x_n]$ as well as the values of a field function $f(x)$ at these points, $f_j=f(x_j)$, we want to find the weights $w^{\mathcal L}_{ij}$, $i,j=1,\dots,n$, such that for a given linear differential operator~$\mathcal L$ we have
\begin{equation}\label{eq:NodalDerivativeApproximation}
 \mathcal L f|_{x=x_i} \approx \sum_{j=1}^nw^{\mathcal L}_{ij}f_j
\end{equation}
To obtain the weights~$w^{\mathcal L}_{ij}$ for the stencil of the point~$x_i$, one assumes that the approximation~\eqref{eq:NodalDerivativeApproximation} is exact for a given set of basis functions~$\left\{\psi_k(x)\right\}$ over the entire stencil of~$x_i$, i.e. 
\begin{equation}\label{eq:NodalDerivativeApproximationForBasisFunctions}
 \mathcal L \psi_{k}(x_i) = \sum_{j=1}^nw^{\mathcal L}_{ij}\psi_{k}(x_j),\quad k=1,\dots,n.
\end{equation}
This defines a linear system for~$\left\{w^{\mathcal
  L}_{ij}\right\}_{j=1}^n$ for each node $i$.  In general, care must be taken in the choice of nodes and basis functions so that this system has a solution that also yields a stable discretization (see, for example, \cite{seib08}).  Unique solvability is guaranteed when  the coefficient matrix~$(\psi_{k}(x_j))$ (restricted to points where $w^{\mathcal L}_{ij}$ is allowed to be nonzero) is square and non-singular, although more general situations are possible. 

In the following we restrict ourselves to polynomial basis functions,
i.e.\ $\psi_k(x) = x^k$ (when the nodes are on an interval of the real
line), as well as to radial basis functions (RBFs),
$\psi_k(x)=\phi(||x-x_k||)$, although the conditions on well-balanced
shallow-water discretizations derived in
Section~\ref{subsec:WellBalancedConditionsShallowWater} do not depend
on the type of basis functions involved. For RBFs, it is clearly not
essential that $x_i,x_k\in\mathbb{R}$, i.e.\ $x_j$ and $x_k$ could be
vectors $\mathbf{x}_j$ and $\mathbf{x}_k$ in~$\mathbb{R}^d$ as well,
$d\geqslant1$.  For polynomial basis functions, standard
finite-difference discretizations are obtained with appropriately
chosen monomial basis functions on tensor-product meshes, and similar
ideas can be extended to meshless finite-difference schemes assuming
the points are suitably distributed through the domain (see, for
example, \cite{seib08}).
%

For polynomial basis functions in the one-dimensional case, the matrix $(\psi_{k}(x_j))$ in Eq.~\eqref{eq:NodalDerivativeApproximationForBasisFunctions} is the Vandermonde matrix and hence non-singular if the points are distinct. The non-singularity of this matrix is also guaranteed for RBFs, again provided that no two points, $x_i$ and $x_j$, coincide.

It is also possible to consider a family of basis functions that includes both RBFs and polynomials. Such a combination is relevant in meshless RBF schemes as derivative approximations derived solely based on RBFs typically cannot reproduce trivial derivatives such as $\mathcal L c=0$ exactly, for $\mathcal L\in\{\p_x,\p_{xx},\dots\}$ and $c=\const$.

Note that once the system~\eqref{eq:NodalDerivativeApproximationForBasisFunctions} is solved at all points $x_i$, $i=1,\dots,n$, we can assemble the weights~$w^{\mathcal L}_{ij}$ in a differentiation matrix~$W^{\mathcal L}=(w^{\mathcal L}_{ij})$. The derivatives of a field function~$f$ at the nodal points $\mathbf{x}=(x_1,\dots,x_n)^{\rm T}$ are thus approximated as $\mathcal L\mathbf{f}\approx W^{\mathcal L}\mathbf{f}$, where $\mathbf{f}=(f(x_1),\dots,f(x_n))^{\rm T}$.

\subsection{Well-balanced discretizations for the shallow-water equations}~\label{subsec:WellBalancedConditionsShallowWater}

For the sake of simplicity of the following exposition, we consider the one-dimensional form of the shallow-water equation~\eqref{eq:ShallowWaterEqs}, i.e.\
\[
 h_t+(hu)_x=0,\qquad (hu)_t+\left(hu^2+\frac12h^2\right)_x=-hb_x.
\]
Extension to the two-dimensional case is straightforward by enforcing that Eqs.~\eqref{eq:ConditionsOnWellBalancedDerivatives} and~\eqref{eq:ConditionsOnWellBalancedDerivativesMatrix} below have to hold for the $y$-derivative approximation as well.

Since $u=0$ in the lake at rest solution, we need to preserve the property
\[
 \frac12\partial_xh^2=-h\partial_xb,
\]
numerically for the case when $h+b=c$, where $c=\const$.  At the discrete level, this translates to the requirement that, at all \red{nodal points},
\[
 \frac12\left(\mathrm{D}^{\rm f}_xh^2\right)_i=-\overline{h}_i\left(\mathrm{D}^{\rm s}_x(c-h)\right)_i,
\]
is satisfied, where $\mathrm{D}_x^{\rm f}$ and $\mathrm{D}_x^{\rm s}$ are the discrete first derivative operators for the partial derivatives with respect to $x$ arising in the flux and source terms of the shallow-water equations (not necessarily the same), and $\overline{h}_i=\sum_{j=1}^nm_{ij}h_j$ denotes an appropriate average over the field function $h$ in the stencil of $x_i$. Note that consistency of the average requires that $\sum_{j=1}^nm_{ij}=1$.

The above equality is naturally satisfied if the following two conditions hold for all $i$
\begin{equation}\label{eq:ConditionsOnWellBalancedDerivatives}
 \left(\mathrm{D}^{\rm s}_xc\right)_i=0,\qquad \frac12\left(\mathrm{D}^{\rm f}_xh^2\right)_i=\overline{h}_i\left(\mathrm{D}^{\rm s}_xh\right)_i.
\end{equation}
The first condition arises naturally as a consistency condition on $\mathrm{D}^{\rm s}_x$, and it is the second condition that requires more effort to achieve.
A key step to this is to generalize $\mathrm{D}^{\rm f}_xh^2$ so that, rather than discretizing this derivative to act on a vector of values of $h^2$, it acts as a bilinear form,
\[
\left(\frac12\mathrm{D}_x^{\rm f}h^2\right)_i=\frac12\mathbf{h}^{\rm T}W^{\rm f}_i\mathbf{h}.
\]
In this way, we define a {\it differentiation tensor} of order 3, $\mathbf{W}^{\rm f}$, whose $i^{\text{th}}$ slice is the matrix $W^{\rm f}_i$ used above.  There are two important properties of $W^{\rm f}_i$ to note.  First, the classical case, where the derivative operator acts on the vector of values of $h^2$, is still allowed, simply by choosing $W^{\rm f}_i$ to be a diagonal matrix.  Secondly, since we only consider values of $\mathbf{h}^{\rm T}W^{\rm f}_i\mathbf{h}$, only the symmetric part of $W^{\rm f}_i$ matters.  In what follows, we assume $W^{\rm f}_i$ to be symmetric, except where noted.

For the right-hand (source) derivative, $\mathrm{D}_x^{\rm s}$, we use a standard discretization as a matrix, writing
\[
 (\mathrm{D}^{\rm s}_xh)_i=(\mathbf{w}^{\rm s}_i)^{\rm T}\mathbf{h},
\]
where we write $\mathbf{w}^{\rm s}_i=(W_{ij}^{\rm s})_{1\leqslant j\leqslant n}$ for the $i$th row of the matrix~$W^{\rm s}$.  Similarly writing $\mathbf{m}_i = (m_{ij})_{1\leqslant j\leqslant n}$ for the averaging stencil, the second condition in~\eqref{eq:ConditionsOnWellBalancedDerivatives} can be represented as
\[
\frac12\mathbf{h}^{\rm T}W^{\rm f}_i\mathbf{h}=(\mathbf{m}^{\rm T}_i\mathbf{h})((\mathbf{w}^{\rm s}_i)^{\rm T}\mathbf{h}) \text{ for all }i.
\]
From this it follows that
\[
\frac12\mathbf{h}^{\rm T}W^{\rm f}_i\mathbf{h}=\mathbf{h}^{\rm T}(\mathbf{m}^{\rm}_i(\mathbf{w}^{\rm s}_i)^{\rm T})\mathbf{h},
\]
and thus the following relation among the weights in the derivative approximations and the averaging relation has to hold for all $i$:
\begin{equation}\label{eq:ConditionsOnWellBalancedDerivativesMatrix}
W^{\rm f}_i=\mathbf{m}_i\left(\mathbf{w}^{\rm s}_i\right)^{\rm T} + \mathbf{w}^{\rm s}_i\mathbf{m}_i^{\rm T}.
\end{equation}
In other words, specifying an averaging matrix~$M$ and the weights for the derivative matrix, $\mathrm{W}_x^{\rm s}$, Eq.~\eqref{eq:ConditionsOnWellBalancedDerivativesMatrix} prescribes weights in the flux derivative~$\mathrm{D}_x^{\rm f}h^2$, represented by the tensor $\mathbf{W}^{\rm f}$, such that the resulting numerical scheme for the shallow-water equations will be well-balanced.  Alternately, given $\mathbf{W}^{\rm f}$ and one of the matrices $M$ and $W^{\rm s}$, it can be used to check if the other matrix can be defined in such a way as to yield a well-balanced scheme.  We further motivate this approach by stating the following theorem.

\begin{theorem}
\label{thm:wellbalanced}
  Let the derivative tensor, $\mathbf{W}^{\rm f}$, derivative matrix,
  $W^{\rm s}$, and averaging matrix, $M$, be given.  The resulting
  discretization is well-balanced, satisfying
  \eqref{eq:ConditionsOnWellBalancedDerivatives} at every \red{nodal point},
  if and only if $W^{\rm s}\mathbf{1} = \mathbf{0}$ (where
  $\mathbf{1}$ and $\mathbf{0}$ represent the vectors with all entries
  equal to $1$ and $0$, respectively) and, for every $i$,
\begin{subequations}\label{eq:ConditionsOnAveraging}
\begin{equation}
 (\mathbf{w}^{\rm s}_i)^{\rm T}\mathbf{m}_i=\frac{(\mathbf{w}^{\rm s}_i)^{\rm T}W^{\rm f}_i\mathbf{w}^{\rm s}_i}{2(\mathbf{w}^{\rm s}_i)^{\rm T}\mathbf{w}^{\rm s}_i},
\end{equation}
\begin{equation}
 \mathbf{v}^{\rm T}\mathbf{m}_i=\frac{\mathbf{v}^{\rm T}W_i^{\rm f}\mathbf{w}_i^{\rm s}}{(\mathbf{w}_i^{\rm s})^{\rm T}\mathbf{w}^{\rm s}_i}\text{ for any $\mathbf{v}\perp\mathbf{w}^{\rm s}_i$,}
\end{equation}
\end{subequations}
and
\begin{equation}\label{eq:ConsistencyBasis}
\mathbf{v}^{\rm T}W^{\rm f}_i\mathbf{u}=0\text{ for any $\mathbf{u},\mathbf{v}\perp\mathbf{w}^{\rm s}_i$.}
\end{equation}
\end{theorem}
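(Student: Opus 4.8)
The plan is to split the claimed equivalence into two reductions and verify each. First I would show that imposing~\eqref{eq:ConditionsOnWellBalancedDerivatives} at every nodal point is equivalent to $W^{\rm s}\mathbf{1}=\mathbf{0}$ together with the matrix relation~\eqref{eq:ConditionsOnWellBalancedDerivativesMatrix} holding at every nodal point. Second I would prove that, at each fixed node $i$, relation~\eqref{eq:ConditionsOnWellBalancedDerivativesMatrix} is equivalent to the two relations in~\eqref{eq:ConditionsOnAveraging} together with~\eqref{eq:ConsistencyBasis}. Composing these two reductions yields the theorem.

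For the first reduction, the consistency half is immediate: $(\mathrm{D}^{\rm s}_xc)_i=c\,(\mathbf{w}^{\rm s}_i)^{\rm T}\mathbf{1}$ vanishes for every constant $c$ and every $i$ exactly when $W^{\rm s}\mathbf{1}=\mathbf{0}$. For the momentum half, substituting the bilinear-form representation of the flux derivative turns the second condition in~\eqref{eq:ConditionsOnWellBalancedDerivatives} into $\frac12\mathbf{h}^{\rm T}W^{\rm f}_i\mathbf{h}=(\mathbf{m}_i^{\rm T}\mathbf{h})\,((\mathbf{w}^{\rm s}_i)^{\rm T}\mathbf{h})$. This is required for all admissible, i.e.\ componentwise positive, water-column vectors $\mathbf{h}$; since both sides are homogeneous quadratic polynomials in $\mathbf{h}$, agreement on this open set forces agreement for all $\mathbf{h}\in\mathbb{R}^n$, and comparing symmetric parts, using that $W^{\rm f}_i$ is taken symmetric, gives precisely~\eqref{eq:ConditionsOnWellBalancedDerivativesMatrix}; the reverse implication is a one-line expansion. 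This is the computation sketched just before the theorem, made precise.

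For the second reduction, fix $i$ and abbreviate $\mathbf{w}:=\mathbf{w}^{\rm s}_i$, $\mathbf{m}:=\mathbf{m}_i$, $A:=W^{\rm f}_i$ (symmetric), with $\mathbf{w}\neq\mathbf{0}$ so that the quotients in~\eqref{eq:ConditionsOnAveraging} are defined. The computational core is the three elementary identities obtained by sandwiching the symmetric matrix $\mathbf{m}\mathbf{w}^{\rm T}+\mathbf{w}\mathbf{m}^{\rm T}$ between $\mathbf{w}$ and vectors orthogonal to it: $\mathbf{w}^{\rm T}(\mathbf{m}\mathbf{w}^{\rm T}+\mathbf{w}\mathbf{m}^{\rm T})\mathbf{w}=2(\mathbf{w}^{\rm T}\mathbf{m})(\mathbf{w}^{\rm T}\mathbf{w})$; $\mathbf{v}^{\rm T}(\mathbf{m}\mathbf{w}^{\rm T}+\mathbf{w}\mathbf{m}^{\rm T})\mathbf{w}=(\mathbf{v}^{\rm T}\mathbf{m})(\mathbf{w}^{\rm T}\mathbf{w})$ for $\mathbf{v}\perp\mathbf{w}$; and $\mathbf{v}^{\rm T}(\mathbf{m}\mathbf{w}^{\rm T}+\mathbf{w}\mathbf{m}^{\rm T})\mathbf{u}=0$ for $\mathbf{u},\mathbf{v}\perp\mathbf{w}$ (both cross terms vanishing because $\mathbf{w}^{\rm T}\mathbf{u}=\mathbf{w}^{\rm T}\mathbf{v}=0$).

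Both directions of the second reduction then follow by examining the symmetric matrix $E:=A-(\mathbf{m}\mathbf{w}^{\rm T}+\mathbf{w}\mathbf{m}^{\rm T})$. If~\eqref{eq:ConditionsOnWellBalancedDerivativesMatrix} holds then $E=0$, and substituting $A=\mathbf{m}\mathbf{w}^{\rm T}+\mathbf{w}\mathbf{m}^{\rm T}$ into the three identities reads off, in turn, the first relation of~\eqref{eq:ConditionsOnAveraging}, the second relation of~\eqref{eq:ConditionsOnAveraging}, and~\eqref{eq:ConsistencyBasis}. Conversely, assuming~\eqref{eq:ConditionsOnAveraging} and~\eqref{eq:ConsistencyBasis}, those relations combined with the same three identities give $\mathbf{w}^{\rm T}E\mathbf{w}=0$, $\mathbf{v}^{\rm T}E\mathbf{w}=0$ for all $\mathbf{v}\perp\mathbf{w}$, and $\mathbf{v}^{\rm T}E\mathbf{u}=0$ for all $\mathbf{u},\mathbf{v}\perp\mathbf{w}$; since $\mathbb{R}^n$ is the orthogonal direct sum of the line through $\mathbf{w}$ and the hyperplane $\mathbf{w}^\perp$, and a symmetric bilinear form vanishing on each of these three block pairings vanishes identically, this forces $E=0$, i.e.\ \eqref{eq:ConditionsOnWellBalancedDerivativesMatrix}. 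I do not anticipate a genuine obstacle here; the two places that deserve an explicit word are the homogeneity/density argument extending the quadratic identity from positive $\mathbf{h}$ to all of $\mathbb{R}^n$, and the implicit nondegeneracy hypothesis $\mathbf{w}^{\rm s}_i\neq\mathbf{0}$ on which the quotients in~\eqref{eq:ConditionsOnAveraging} rest.
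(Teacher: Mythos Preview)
Your proposal is correct and follows the same line of argument as the paper: reduce well-balancedness to the matrix relation~\eqref{eq:ConditionsOnWellBalancedDerivativesMatrix}, then read off~\eqref{eq:ConditionsOnAveraging} and~\eqref{eq:ConsistencyBasis} by sandwiching with $\mathbf{w}^{\rm s}_i$ and vectors orthogonal to it. In fact your write-up is more complete than the paper's: the paper's proof only verifies the forward implication (that~\eqref{eq:ConditionsOnWellBalancedDerivativesMatrix} yields~\eqref{eq:ConditionsOnAveraging} and~\eqref{eq:ConsistencyBasis}), whereas you also supply the converse via the symmetric error matrix $E=A-(\mathbf{m}\mathbf{w}^{\rm T}+\mathbf{w}\mathbf{m}^{\rm T})$ and the block-decomposition argument, and you make explicit the density step passing from positive $\mathbf{h}$ to all of $\mathbb{R}^n$ as well as the standing assumption $\mathbf{w}^{\rm s}_i\neq\mathbf{0}$.
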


\begin{proof}
First note that $W^s\mathbf{1} = \mathbf{0}$ naturally implies the
first condition in \eqref{eq:ConditionsOnWellBalancedDerivatives}.  Next, from~\eqref{eq:ConditionsOnWellBalancedDerivativesMatrix}, recalling that $W^{\rm f}_i$ is symmetric, we have that
\[
(\mathbf{w}^{\rm s}_i)^{\rm T}W^{\rm f}_i\mathbf{w}^{\rm s}_i=2((\mathbf{w}^{\rm s}_i)^{\rm T}\mathbf{m}_i)\left((\mathbf{w}^{\rm s}_i)^{\rm T}\mathbf{w}_i^{\rm s}\right).
\]
Similarly, we also have that for any $\mathbf{v}\perp\mathbf{w}^{\rm s}_i$,
\[
 \mathbf{v}^{\rm T}W^{\rm f}_i\mathbf{w}_i^{\rm s}=\mathbf{v}^{\rm T}\mathbf{m}_i\left(\mathbf{w}^{\rm s}_i\right)^{\rm T}\mathbf{w}_i^{\rm s}.
\]
Finally, we derive~\eqref{eq:ConsistencyBasis} by noting that~\eqref{eq:ConditionsOnWellBalancedDerivativesMatrix} implies that
\begin{equation}
\mathbf{v}^{\rm T}W^{\rm f}_i\mathbf{u}=0,
\end{equation}
whenever $\mathbf{u},\mathbf{v}\perp\mathbf{w}^{\rm s}_i$.
\end{proof}

When $W^{\rm s}$ and $M$ are specified, Eq.~\eqref{eq:ConditionsOnWellBalancedDerivativesMatrix} directly prescribes the flux differentiation tensor, $\mathbf{W}^{\rm f}$, so that its slices are given by symmetric outer products of the rows of $W^{\rm s}$ and $M$.  When both differentiation rules, $\mathbf{W}^{\rm f}$ and $W^{\rm s}$, are specified, then an algorithmic form of Theorem \ref{thm:wellbalanced} can be expressed by introducing a basis $\langle\mathbf{w}_i^{\rm s},\mathbf{v}_1,\dots,\mathbf{v}_{n-1}\rangle$ where the $n-1$ vectors~$\mathbf{v}_j$ are pairwise orthogonal as well as orthogonal to $\mathbf{w}^{\rm s}_i$, i.e.\
\begin{equation}\label{eq:OrthogonalBasis}
 \mathbf{v}^{\rm T}_j\mathbf{w}^{\rm s}_i=0,\quad \textup{for}\quad 1\leqslant j \leqslant n-1,\qquad \mathbf{v}^{\rm T}_j\mathbf{v}_k=0,\quad \textup{for}\quad j\ne k.
\end{equation}
Then, the existence of a compatible averaging rule is guaranteed by Eq.~\eqref{eq:ConsistencyBasis}, which can be expressed as
\begin{equation}\label{eq:compatibility_req}
\mathbf{v}_j^{\rm T}W^{\rm f}_i\mathbf{v}_k=0,
\end{equation}
for $1\leqslant j,k \leqslant n-1$.  If these conditions are satisfied, then the averaging rule itself is specified for point $i$ by specializing \eqref{eq:ConditionsOnAveraging} to the basis, giving
\begin{equation}\label{eq:defining_m}
 (\mathbf{w}^{\rm s}_i)^{\rm T}\mathbf{m}_i=\frac{(\mathbf{w}^{\rm s}_i)^{\rm T}W^{\rm f}_i\mathbf{w}^{\rm s}_i}{2(\mathbf{w}^{\rm s}_i)^{\rm T}\mathbf{w}^{\rm s}_i}
\text{ and }
 \mathbf{v}_j^{\rm T}\mathbf{m}_i=\frac{\mathbf{v}_j^{\rm T}W_i^{\rm f}(\mathbf{w}_i^{\rm s})}{(\mathbf{w}_i^{\rm s})^{\rm T}\mathbf{w}^{\rm s}_i} \text{ for }1\leqslant j \leqslant n-1.
\end{equation}
The derivation of Eqs.~\eqref{eq:ConditionsOnAveraging}
and~\eqref{eq:ConsistencyBasis}, as well as of the basis discussed
above, can be restated in the obvious way to show both existence and
the definition of a compatible source differentiation matrix, $W^{\rm s}$, given $\mathbf{W}^{\rm f}$ and $M$.

Additionally,
Eq.~\eqref{eq:ConditionsOnWellBalancedDerivativesMatrix} can be
used to understand the consistency and accuracy of the rules derived
as described above in relation to their continuum counterparts.  To do
so, we define vectors $\mathbf{x}^{(p)}$ for $p\geqslant 0$ such that
$\left(\mathbf{x}^{(p)}\right)_i = (x_i)^p$ for all $i$, noting that
the case of $p=0$ corresponds to the vector, $\mathbf{1}$, of all ones.  We define
the following three consistency/accuracy conditions:
\begin{alignat*}{3}
  & A_P:\quad && \mathbf{m}_i^{\rm T}\mathbf{x}^{(p)} = (x_i)^p \quad && 0 \leqslant p
  \leqslant P \\
  & B_P:\quad && (\mathbf{w}^{\rm s}_i)^{\rm T}\mathbf{x}^{(p)} =
  p(x_i)^{p-1} \quad && 0 \leqslant p \leqslant P \\
  & C_{P,Q}:\quad && (\mathbf{x}^{(q)})^{\rm T}W_i^{\rm f}\mathbf{x}^{(p)} =
  (p+q)(x_i)^{p+q-1} \quad && 0 \leqslant p \leqslant P,\ 0 \leqslant q \leqslant Q
\end{alignat*}
For basic consistency, we would require that $A_0$ holds for all $i$
(meaning that $M$ defines a true averaging (row stochastic) matrix),
$B_1$ hold for all $i$ (so that the discrete derivative of a constant
is zero and of a linear function is its slope).  Consistency is
somewhat less natural for $\mathbf{W}^{\rm f}$, and could be expressed
either as $C_{1,0}$ (which is equivalent to $C_{0,1}$ since $W_i^{\rm f}$ is symmetric) or $C_{1,1}$ holding for all $i$.  In the former
case, this requires that the discrete flux derivative reproduce the
true derivative on constants (the case when $p=q=0$) and that the
scheme produced by ``flattening'' the flux derivative matrix at point
$i$ into a row vector as $(\mathbf{x}^{(0)})^{\rm T}W_i^{\rm f}$ exactly
reproduces the derivative of a linear function.  Requiring the stronger
condition $C_{1,1}$ requires the flux derivative also to be faithful to
the true derivative of a quadratic function, which is counter to our
usual expectation of consistency of the first derivative, but more
natural when recalling this is an approximation to the derivative of
$h^2$ and not $h$ itself.  When these conditions hold for all $i$ and
larger values of $P$ and $Q$, they express accuracy conditions that
are natural in the usual sense for meshless finite differences,
requiring that they be accurate pointwise for monomials up to a given
order.

The natural question to be answered in the context of Theorem
\ref{thm:wellbalanced} is whether or not the conditions given there,
together with consistency and accuracy in the sense of conditions
$A_P$, $B_P$, and/or $C_{P,Q}$ yield consistency and accuracy for the
third term in the well-balanced scheme. The following results present
each possible implication.

\begin{theorem}\label{thm:A+B=>C}
  Let $M$ and $W^{\rm s}$ be given, and assume conditions $A_P$ and
  $B_Q$ are satisfied for each \red{nodal point} $i$ with $P,Q\geqslant 0$.  Let $\mathbf{W}^{\rm
    f}$ be determined by
  Eq.~\eqref{eq:ConditionsOnWellBalancedDerivativesMatrix}.  Then
  condition $C_{R,R}$ holds for all $i$ with $R = \min(P,Q)$.
\end{theorem}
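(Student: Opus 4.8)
The plan is to verify $C_{R,R}$ by a direct computation: substitute the defining identity~\eqref{eq:ConditionsOnWellBalancedDerivativesMatrix} for the flux slice $W_i^{\rm f}$ and then read off each resulting factor from the hypotheses. Fix a \red{nodal point} $i$ and exponents $p,q$ with $0\leqslant p,q\leqslant R=\min(P,Q)$. Since $W_i^{\rm f}=\mathbf{m}_i(\mathbf{w}^{\rm s}_i)^{\rm T}+\mathbf{w}^{\rm s}_i\mathbf{m}_i^{\rm T}$, the quantity $(\mathbf{x}^{(q)})^{\rm T}W_i^{\rm f}\mathbf{x}^{(p)}$ splits into the sum of two scalar products,
\[
\bigl((\mathbf{x}^{(q)})^{\rm T}\mathbf{m}_i\bigr)\bigl((\mathbf{w}^{\rm s}_i)^{\rm T}\mathbf{x}^{(p)}\bigr)+\bigl((\mathbf{x}^{(q)})^{\rm T}\mathbf{w}^{\rm s}_i\bigr)\bigl(\mathbf{m}_i^{\rm T}\mathbf{x}^{(p)}\bigr).
\]

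The next step is to evaluate the four one-dimensional factors. Because $q\leqslant R\leqslant P$ and $p\leqslant R\leqslant P$, condition $A_P$ gives $(\mathbf{x}^{(q)})^{\rm T}\mathbf{m}_i=(x_i)^q$ and $\mathbf{m}_i^{\rm T}\mathbf{x}^{(p)}=(x_i)^p$. Because $q\leqslant R\leqslant Q$ and $p\leqslant R\leqslant Q$, condition $B_Q$ gives $(\mathbf{w}^{\rm s}_i)^{\rm T}\mathbf{x}^{(p)}=p(x_i)^{p-1}$ and $(\mathbf{x}^{(q)})^{\rm T}\mathbf{w}^{\rm s}_i=q(x_i)^{q-1}$. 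Substituting these into the displayed sum yields
\[
(x_i)^q\,p(x_i)^{p-1}+q(x_i)^{q-1}\,(x_i)^p=(p+q)(x_i)^{p+q-1},
\]
which is exactly $C_{R,R}$ at node $i$; since $i$ was arbitrary, the claim follows. I would likely present the splitting of $W_i^{\rm f}$ and this substitution in a single short display.

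There is no genuine obstacle here; the only points requiring care are bookkeeping. One should observe that the choice $R=\min(P,Q)$ is precisely what makes both $A_P$ (invoked at exponents up to $R$ on the averaging row) and $B_Q$ (invoked at exponents up to $R$ on the source-derivative row) simultaneously available, so $R$ cannot in general be taken larger without further assumptions. One should also adopt the standard convention that $p(x_i)^{p-1}$ denotes $0$ when $p=0$ (matching the statement of $B_Q$ that the discrete derivative annihilates constants), so that the terms with $p=0$ or $q=0$ are handled without ever forming $(x_i)^{-1}$; with this convention the final identity $(p+q)(x_i)^{p+q-1}=p(x_i)^{p+q-1}+q(x_i)^{p+q-1}$ holds literally for all $p,q\geqslant 0$, including the case $x_i=0$.
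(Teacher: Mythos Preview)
Your proposal is correct and follows essentially the same approach as the paper: substitute the defining identity for $W_i^{\rm f}$, split the bilinear form into two scalar products, and evaluate each factor via $A_P$ and $B_Q$. Your additional remarks on why $R=\min(P,Q)$ is the natural choice and on the $p=0$ convention are helpful clarifications but do not alter the argument.
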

\begin{proof}
For any \red{nodal point} $i$, consider $(\mathbf{x}^{(q)})^{\rm T}W_i^{\rm f}\mathbf{x}^{(p)}$ for $0 \leqslant p,q \leqslant \min(P,Q)$:
\begin{align*}
  (\mathbf{x}^{(q)})^{\rm T}W_i^{\rm f}\mathbf{x}^{(p)} & =
  (\mathbf{x}^{(q)})^{\rm T}\mathbf{m}_i(\mathbf{w}^{\rm s}_i)^{\rm T}\mathbf{x}^{(p)}
  + (\mathbf{x}^{(q)})^{\rm T}\mathbf{w}^{\rm s}_i\mathbf{m}_i^{\rm T}\mathbf{x}^{(p)}
  \\
  & = (x_i^q)(px_i^{p-1}) + (qx_i^{q-1})(x_i^p) = (p+q)(x_i)^{p+q-1}.
\end{align*}
\end{proof}

\begin{theorem}\label{thm:A+C=>B}
  Let $M$ and $\mathbf{W}^{\rm f}$ be given, and assume conditions
  $A_R$ and $C_{P,Q}$ are satisfied for each \red{nodal point} $i$ with
  $P,Q,R \geqslant 0$.  Assume there exists a matrix $W^{\rm s}$ such that
  Eq.~\eqref{eq:ConditionsOnWellBalancedDerivativesMatrix} holds
  and the conditions of Theorem \ref{thm:wellbalanced} are
  satisfied. Then condition $B_S$ holds for all $i$ with $S =
  \max(P,Q)$.
\end{theorem}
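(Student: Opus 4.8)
The plan is to exploit the rank-two structure of $W_i^{\rm f}$ in \eqref{eq:ConditionsOnWellBalancedDerivativesMatrix}, probing it from the left by the all-ones vector $\mathbf{x}^{(0)}=\mathbf{1}$ so that the two outer-product terms are disposed of by the two hypotheses separately. First I would record that, because $W_i^{\rm f}$ is symmetric, the scalar identity in $C_{P,Q}$ is symmetric under the interchange $p\leftrightarrow q$; hence, specialising to $q=0$, it holds for all $0\leqslant p\leqslant\max(P,Q)=S$ — using the stated form when $p\leqslant P$ and the transposed form when $p\leqslant Q$. It therefore suffices to identify $(\mathbf{x}^{(0)})^{\rm T}W_i^{\rm f}\mathbf{x}^{(p)}$ with $(\mathbf{w}_i^{\rm s})^{\rm T}\mathbf{x}^{(p)}$ for these $p$.

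For that identification I would substitute \eqref{eq:ConditionsOnWellBalancedDerivativesMatrix} and expand, obtaining $(\mathbf{x}^{(0)})^{\rm T}W_i^{\rm f}\mathbf{x}^{(p)}=\big((\mathbf{x}^{(0)})^{\rm T}\mathbf{m}_i\big)\big((\mathbf{w}_i^{\rm s})^{\rm T}\mathbf{x}^{(p)}\big)+\big((\mathbf{w}_i^{\rm s})^{\rm T}\mathbf{x}^{(0)}\big)\big(\mathbf{m}_i^{\rm T}\mathbf{x}^{(p)}\big)$. Here the first coefficient $(\mathbf{x}^{(0)})^{\rm T}\mathbf{m}_i=\mathbf{m}_i^{\rm T}\mathbf{1}$ equals $1$ by the $p=0$ instance of $A_R$ (valid since $R\geqslant0$), while the coefficient $(\mathbf{w}_i^{\rm s})^{\rm T}\mathbf{x}^{(0)}=(\mathbf{w}_i^{\rm s})^{\rm T}\mathbf{1}$ vanishes because $W^{\rm s}\mathbf{1}=\mathbf{0}$ is one of the conditions of Theorem~\ref{thm:wellbalanced} assumed in the statement. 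Thus $(\mathbf{x}^{(0)})^{\rm T}W_i^{\rm f}\mathbf{x}^{(p)}=(\mathbf{w}_i^{\rm s})^{\rm T}\mathbf{x}^{(p)}$. Combining this with the $q=0$ instance of $C_{P,Q}$ gives $(\mathbf{w}_i^{\rm s})^{\rm T}\mathbf{x}^{(p)}=p(x_i)^{p-1}$ for $0\leqslant p\leqslant S$, which is precisely $B_S$ at node $i$; since $i$ was arbitrary, the proof is complete.

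I do not expect a genuine obstacle here: the whole argument is a short bilinear-form manipulation, essentially the reverse of the computation in the proof of Theorem~\ref{thm:A+B=>C}. The one point deserving attention — and the reason that the existence of a valid $W^{\rm s}$ is a real hypothesis rather than a formality — is the cancellation of the second outer-product term through $W^{\rm s}\mathbf{1}=\mathbf{0}$; without it the left-hand side would retain a dependence on the averaging moments $\mathbf{m}_i^{\rm T}\mathbf{x}^{(p)}$ and could not be read off as $(\mathbf{w}_i^{\rm s})^{\rm T}\mathbf{x}^{(p)}$. It is also worth noting in the write-up that only the $p=0$ case of $A_R$ is actually used, and that the symmetry of $C_{P,Q}$ is exactly what raises the attainable order from $\min(P,Q)$ (as in Theorem~\ref{thm:A+B=>C}) to $\max(P,Q)$ here.
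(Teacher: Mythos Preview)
Your proof is correct and follows essentially the same approach as the paper: probe $W_i^{\rm f}$ from the left by $\mathbf{x}^{(0)}$, expand via \eqref{eq:ConditionsOnWellBalancedDerivativesMatrix}, and use $A_0$ together with $(\mathbf{w}_i^{\rm s})^{\rm T}\mathbf{1}=0$ to isolate $(\mathbf{w}_i^{\rm s})^{\rm T}\mathbf{x}^{(p)}$. The only minor difference is that the paper obtains $(\mathbf{w}_i^{\rm s})^{\rm T}\mathbf{1}=0$ from the $p=q=0$ instance of $C_{P,Q}$ itself, whereas you invoke it directly as part of the Theorem~\ref{thm:wellbalanced} hypotheses; both are available, so this is a stylistic distinction rather than a substantive one.
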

\begin{proof}
  Without loss of generality, we consider the case where $P\geqslant Q$.
  First consider $(\mathbf{x}^{(q)})^{\rm T}W_i^{\rm f}\mathbf{x}^{(p)}$ for
  $p=q=0$, which gives
\[
  0 = (\mathbf{x}^{(0)})^{\rm T}W_i^{\rm f}\mathbf{x}^{(0)} =
  (\mathbf{x}^{(0)})^{\rm T}\mathbf{m}_i(\mathbf{w}^{\rm s}_i)^{\rm T}\mathbf{x}^{(0)}
  + (\mathbf{x}^{(0)})^{\rm T}\mathbf{w}^{\rm
    s}_i\mathbf{m}_i^{\rm T}\mathbf{x}^{(0)}
  = 2(\mathbf{w}^{\rm s}_i)^{\rm T}\mathbf{x}^{(0)}.
\]
Thus, $(\mathbf{w}^{\rm s}_i)^{\rm T}\mathbf{x}^{(0)} = 0$.  With this, we
consider $(\mathbf{x}^{(q)})^{\rm T}W_i^{\rm f}\mathbf{x}^{(p)}$ for $q=0$,
$1\leqslant p \leqslant P$, giving
\[
px_i^{p-1} = (\mathbf{x}^{(0)})^{\rm T}W_i^{\rm f}\mathbf{x}^{(p)} =
(\mathbf{x}^{(0)})^{\rm T}\mathbf{m}_i(\mathbf{w}^{\rm
  s}_i)^{\rm T}\mathbf{x}^{(p)} + (\mathbf{x}^{(0)})^{\rm T}\mathbf{w}^{\rm
  s}_i\mathbf{m}_i^{\rm T}\mathbf{x}^{(p)} = 1(\mathbf{w}^{\rm
  s}_i)^{\rm T}\mathbf{x}^{(p)} + 0.
\]
Thus, $(\mathbf{w}^{\rm s}_i)^{\rm T}\mathbf{x}^{(p)} = p(x_i)^{p-1}$ for $1
\leqslant p \leqslant P$.
  
\end{proof}

\begin{theorem} \label{thm:B+C=>A}
  Let $W^{\rm s}$ and $\mathbf{W}^{\rm f}$ be given, and assume conditions
  $B_R$ and $C_{P,Q}$ are satisfied for each \red{nodal point} $i$ with
  $P,Q,R \geqslant 1$.  Assume there exists a matrix $M$ such that
  Eq.~\eqref{eq:ConditionsOnWellBalancedDerivativesMatrix} holds
  and the conditions of Theorem \ref{thm:wellbalanced} are
  satisfied. Then condition $A_S$ holds for all $i$ with $S =
  \min(\max(P,Q),R)$.
\end{theorem}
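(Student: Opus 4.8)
The plan is to adapt the contraction arguments from the proofs of Theorems~\ref{thm:A+B=>C} and~\ref{thm:A+C=>B}. Since a matrix $M$ is assumed to exist with~\eqref{eq:ConditionsOnWellBalancedDerivativesMatrix} holding, I would contract that identity against monomial vectors in order to read off the moments $\mu_p := \mathbf{m}_i^{\rm T}\mathbf{x}^{(p)}$ of the averaging stencil at each node~$i$, and show they equal $(x_i)^p$ in the required range. Because $W_i^{\rm f} = \mathbf{m}_i(\mathbf{w}_i^{\rm s})^{\rm T} + \mathbf{w}_i^{\rm s}\mathbf{m}_i^{\rm T}$ is automatically symmetric, $C_{P,Q}$ is equivalent to $C_{Q,P}$, so I may reduce to the case $P \geqslant Q$, in which $S = \min(P,R)$, and it suffices to prove $A_{\min(P,R)}$.

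The first step fixes $\mu_1$. Contracting~\eqref{eq:ConditionsOnWellBalancedDerivativesMatrix} with $\mathbf{x}^{(1)}$ on both sides, the left side is $2x_i$ by $C_{P,Q}$ (taking $p=q=1$, legal since $P,Q\geqslant 1$), while the right side collapses to $2\mu_1(\mathbf{w}_i^{\rm s})^{\rm T}\mathbf{x}^{(1)} = 2\mu_1$ using $B_R$ and $R\geqslant 1$; hence $\mu_1 = x_i$.

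The second step is the general computation. For $1\leqslant p\leqslant\min(P,R)$, contract~\eqref{eq:ConditionsOnWellBalancedDerivativesMatrix} with $\mathbf{x}^{(1)}$ on the left and $\mathbf{x}^{(p)}$ on the right. The left side is $(p+1)(x_i)^p$ by $C_{P,Q}$ (with $q=1\leqslant Q$ and $p\leqslant P$); the right side is $\mu_1\cdot p(x_i)^{p-1} + (\mathbf{w}_i^{\rm s})^{\rm T}\mathbf{x}^{(1)}\cdot\mu_p = p(x_i)^p + \mu_p$, using $\mu_1=x_i$ and $B_R$ with $p,1\leqslant R$. Solving gives $\mu_p = (x_i)^p$. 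The case $p=0$ is separate but immediate: the formal $(x_i)^{-1}$ term carries a factor $0$, so one just uses $(\mathbf{w}_i^{\rm s})^{\rm T}\mathbf{x}^{(0)}=0$ (contained in $B_R$) to obtain $\mu_0 = 1$. Together these are precisely $A_{\min(P,R)} = A_S$ at node~$i$.

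Finally, I would note that the reduction to $P\geqslant Q$ can be avoided by also running the mirror computation---contracting with $\mathbf{x}^{(1)}$ in the other slot---which yields $A_{\min(Q,R)}$ as well; the lattice identity $\max(\min(P,R),\min(Q,R)) = \min(\max(P,Q),R)$ then recovers the stated $S$ directly. I do not foresee a genuine obstacle: the only care needed is the index bookkeeping that keeps every invocation of $B_R$ and $C_{P,Q}$ within its stated range (this is exactly where the hypotheses $Q\geqslant 1$ and $R\geqslant 1$ enter) together with the degenerate $p=0$ term, plus the observation that the argument in fact shows that \emph{any} $M$ compatible with~\eqref{eq:ConditionsOnWellBalancedDerivativesMatrix} inherits these moments, so the standing assumptions (existence of such an $M$, plus the conditions of Theorem~\ref{thm:wellbalanced}) are used only to make the contraction identity available.
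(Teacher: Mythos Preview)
Your proposal is correct and follows essentially the same approach as the paper's own proof: reduce by symmetry to $P\geqslant Q$, then contract~\eqref{eq:ConditionsOnWellBalancedDerivativesMatrix} against $\mathbf{x}^{(1)}$ on one side and $\mathbf{x}^{(p)}$ on the other to read off the moments $\mathbf{m}_i^{\rm T}\mathbf{x}^{(p)}$. The only cosmetic differences are the order in which you handle the special cases ($\mu_1$ first, then the general $p$, then $\mu_0$, versus the paper's $\mu_0$, $\mu_1$, general~$p$) and your additional remark about the lattice identity, which the paper does not include.
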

\begin{proof}
  Without loss of generality, we consider the case where $P\geqslant Q$.
  First consider $(\mathbf{x}^{(q)})^{\rm T}W_i^{\rm f}\mathbf{x}^{(p)}$ for
  $p=0$, $q=1$, which gives
\[
  1 = (\mathbf{x}^{(1)})^{\rm T}W_i^{\rm f}\mathbf{x}^{(0)} =
  (\mathbf{x}^{(1)})^{\rm T}\mathbf{m}_i(\mathbf{w}^{\rm s}_i)^{\rm T}\mathbf{x}^{(0)}
  + (\mathbf{x}^{(1)})^{\rm T}\mathbf{w}^{\rm
    s}_i\mathbf{m}_i^{\rm T}\mathbf{x}^{(0)}
  = 0 + 1\mathbf{m}_i^{\rm T}\mathbf{x}^{(0)}.
\]
Thus, $\mathbf{m}_i^{\rm T}\mathbf{x}^{(0)} = 1$.  Similarly, taking
$p=q=1$ gives
\[
  2x_i = (\mathbf{x}^{(1)})^{\rm T}W_i^{\rm f}\mathbf{x}^{(1)} =
  (\mathbf{x}^{(1)})^{\rm T}\mathbf{m}_i(\mathbf{w}^{\rm s}_i)^{\rm T}\mathbf{x}^{(1)}
  + (\mathbf{x}^{(1)})^{\rm T}\mathbf{w}^{\rm
    s}_i\mathbf{m}_i^{\rm T}\mathbf{x}^{(1)}
  = 2\mathbf{m}_i^{\rm T}\mathbf{x}^{(1)}.
  \]
  Thus, $\mathbf{m}_i^{\rm T}\mathbf{x}^{(1)} = x_i$.  Finally, considering
  the general case with $q=1$ and $2\leqslant p \leqslant \min(P,R)$, we have
  \[
(p+1)x_i^{p} = (\mathbf{x}^{(1)})^{\rm T}W_i^{\rm f}\mathbf{x}^{(p)} =
  (\mathbf{x}^{(1)})^{\rm T}\mathbf{m}_i(\mathbf{w}^{\rm s}_i)^{\rm T}\mathbf{x}^{(p)}
  + (\mathbf{x}^{(1)})^{\rm T}\mathbf{w}^{\rm
    s}_i\mathbf{m}_i^{\rm T}\mathbf{x}^{(p)} = px_i^p + \mathbf{m}_i^{\rm T}\mathbf{x}^{(p)}.
  \]
  This implies that $\mathbf{m}_i^{\rm T}\mathbf{x}^{(p)} = x_i^p$ for all $i$.
\end{proof}

These results illustrate a natural asymmetry between the
consistency/accuracy of the various terms defined via
Eq.~\eqref{eq:ConditionsOnWellBalancedDerivativesMatrix} and
Theorem \ref{thm:wellbalanced}.  This is most noticeable in Theorem
\ref{thm:A+C=>B}, where only basic consistency of $M$ is needed for
$W^{\rm s}$ to inherit the full accuracy of $\mathbf{W}^{\rm f}$.  In
contrast, if both $W^{\rm s}$ and $\mathbf{W}^{\rm f}$ are consistent,
Theorem \ref{thm:B+C=>A} shows that $M$ inherits only the lower level
of accuracy from them.

We now provide several examples from classical finite differences on a
uniform mesh with spacing~$\Delta x$ to demonstrate the consequences
of Eq.~\eqref{eq:ConditionsOnWellBalancedDerivativesMatrix} for
prescribing $\mathbf{W}^{\rm f}$ when $W^{\rm s}$ and $M$ are given.
In what follows, $\mathbf{e}_i$ denotes the canonical $i$th unit
vector.

\begin{example}\label{ex:FirstOrderUpwind}
  Suppose we wish to take both derivatives to be given by first-order
  upwind discretizations, with
  \begin{subequations}\label{eq:FirstOrderUpwind}
  \begin{equation}
    \mathbf{w}^{\rm s}_i = \frac{1}{\Delta x}(\mathbf{e}_{i} -
      \mathbf{e}_{i-1})
  \end{equation}
  and
  \begin{equation}
W^{\rm f}_i = \frac{1}{\Delta x}(\mathbf{e}_{i}\mathbf{e}_{i}^{\rm T} -\mathbf{e}_{i-1}\mathbf{e}_{i-1}^{\rm T}).
    \end{equation}
    \end{subequations}
To satisfy the orthogonality condition in \eqref{eq:OrthogonalBasis},
we take $\mathbf{v}_1 = \mathbf{e}_{i-1} + \mathbf{e}_i$, and complete
$\mathbf{v}_2$ through $\mathbf{v}_{n-1}$ with the unit vectors
$\mathbf{e}_{j}$ for $j\neq i$ and $j\neq i-1$. It is straightforward to see that $W^{\rm
  f}_i\mathbf{v}_j = \mathbf{0}$ for $2\leqslant j \leqslant n-1$, meaning that
we only need to verify \eqref{eq:compatibility_req} for $\mathbf{v}_1$
and then use \eqref{eq:defining_m} to define $\mathbf{m}_i =
m_{i-1}\mathbf{e}_{i-1} + m_i\mathbf{e}_i$.

To verify \eqref{eq:compatibility_req}, we see that $W^{\rm
  f}_i\mathbf{v}_1 = (\mathbf{e}_{i} -
\mathbf{e}_{i-1})/\Delta x$ and, so $\mathbf{v}_1^{\rm T}W^{\rm
  f}_i\mathbf{v}_1 = 0$.  Since $W^{\rm f}_i\mathbf{w}^{\rm s}_i =
(\mathbf{e}_{i} + \mathbf{e}_{i-1})/\Delta x^2$, the first
equation in \eqref{eq:defining_m} forces $m_{i-1} = m_i$.  Computing from
the second, we find that these both take value $1/2$, giving
$\mathbf{m}_i = (\mathbf{e}_i + \mathbf{e}_{i-1})/2$.  Direct
calculation shows that
Eq.~\eqref{eq:ConditionsOnWellBalancedDerivativesMatrix} is
satisfied for this $\mathbf{m}_i$.

Written in component form, the associated upwind scheme defined through~\eqref{eq:FirstOrderUpwind} reads
\[
 \frac12\left(\frac{h_i^2-h_{i-1}^2}{\Delta x}\right)=\bar h_i\frac{b_i-b_{i-1}}{\Delta x},\quad \bar h_i=\frac12(h_i+h_{i-1}).
\]
which is directly seen to be well-balanced.  From
\eqref{eq:FirstOrderUpwind}, we can verify that conditions $B_1$ and
$C_{1,0}$ are satisfied for all $i$ by the upwind discretizations.  In this case
(since $C_{1,1}$ does not hold for all $i$), Theorem \ref{thm:B+C=>A}
does not apply, and it can easily be seen that $M$ satisfies $A_0$ for
all $i$, but not $A_1$.  Considering the alternate implications, if we
were to specify $M$ and $W^{\rm s}$, Theorem \ref{thm:A+B=>C} would
confirm that $A_0$ and $B_1$ for all $i$ implies $C_{0,0}$ for all
$i$, but not $C_{1,1}$ for all $i$.  Similarly, since $A_0$ and
$C_{1,0}$ hold for all $i$, Theorem \ref{thm:A+C=>B} implies that
$B_1$ holds for all $i$.
\end{example}

\begin{example}\label{ex:CenteredFiniteDifferences}
A similar calculation verifies that taking a centered averaging for
$\mathbf{m}_i$ yields a well-balanced scheme when the two derivatives
are approximated by centered finite differences.  Setting
\begin{subequations}\label{eq:WeightedCenteredDifferences}
\begin{equation}
 \mathbf{m}_i=\frac12(\mathbf{e}_{i+1}+\mathbf{e}_{i-1}),\quad \mathbf{w}^{\rm s}_i=\frac1{2\Delta x}(\mathbf{e}_{i+1}-\mathbf{e}_{i-1}),
\end{equation}
 then Eq.~\eqref{eq:ConditionsOnWellBalancedDerivativesMatrix}, gives
\begin{equation}
  W^{\rm f}_i=\frac1{2\Delta x}(\mathbf{e}_{i+1}\mathbf{e}_{i+1}^{\rm T}-\mathbf{e}_{i-1}\mathbf{e}_{i-1}^{\rm T} ).
\end{equation}
\end{subequations}
Component-wise the differentiation and averaging
rule~\eqref{eq:WeightedCenteredDifferences} imply that, at the node
$x_i$, our well-balanced scheme is given by
\[
 \frac12\left(\frac{h_{i+1}^2-h_{i-1}^2}{2\Delta x}\right)=\bar h_i\frac{b_{i+1}-b_{i-1}}{2\Delta x},\quad \bar h_i=\frac12(h_{i+1}+h_{i-1}).
\]
which obviously satisfies the
conditions~\eqref{eq:ConditionsOnWellBalancedDerivatives}.
Considering the consistency/accuracy conditions for these rules, we
can directly verify that $A_1$, $B_2$, $C_{1,1}$, and $C_{2,0}$ hold
for all $i$.  (Note that neither $C_{1,1}$ nor $C_{2,0}$ implies the
other, and that $C_{2,1}$ does not hold for all $i$ for this choice of
$\mathbf{W}^{\rm f}$.)  Theorem \ref{thm:A+B=>C} states that $A_1$ and
$B_2$ together imply $C_{1,1}$, Theorem \ref{thm:A+C=>B} states that
$A_1$ and $C_{2,0}$ together imply $B_2$ , and Theorem
\ref{thm:B+C=>A} states that $B_2$ and $C_{1,1}$ imply $A_1$.  We note
that the conclusions of these theorems naturally depend differently on
$P$ and $Q$ in $C_{P,Q}$, with $\max(P,Q)$ appearing in Theorem
\ref{thm:A+C=>B}, but $\min(P,Q)$ in Theorem \ref{thm:B+C=>A}.
\end{example}

\begin{example}
When we choose centered differencing for the source derivative,
\begin{equation}
\mathbf{w}^{\rm s}_i=\frac1{2\Delta x}(\mathbf{e}_{i+1}-\mathbf{e}_{i-1}),
\end{equation}
we can consider which values for $W^{\rm f}_i$ are possible to achieve a well-balanced scheme. If we restrict $W^{\rm f}_i$ to have a
nonzero pattern over only the three points $i-1$, $i$, and $i+1$, we
can write
\begin{align}\label{eq:general_left}
W^{\rm f}_i & = w_{i-1,i-1}\mathbf{e}_{i-1}\mathbf{e}_{i-1}^{\rm T} +
w_{i,i}\mathbf{e}_{i}\mathbf{e}_{i}^{\rm T} +
w_{i+1,i+1}\mathbf{e}_{i+1}\mathbf{e}_{i+1}^{\rm T} \\
& + w_{i-1,i}(\mathbf{e}_{i-1}\mathbf{e}_i^{\rm T} +
\mathbf{e}_{i}\mathbf{e}_{i-1}^{\rm T}) + 
w_{i-1,i+1}(\mathbf{e}_{i-1}\mathbf{e}_{i+1}^{\rm T} +
\mathbf{e}_{i+1}\mathbf{e}_{i-1}^{\rm T}) + 
w_{i,i+1}(\mathbf{e}_{i}\mathbf{e}_{i+1}^{\rm T} +
\mathbf{e}_{i+1}\mathbf{e}_{i}^{\rm T}). \nonumber
\end{align}
Take $\mathbf{v}_1 = \mathbf{e}_i$, $\mathbf{v}_2 = \mathbf{e}_{i-1} +
\mathbf{e}_{i+1}$, and complete $\mathbf{v}_3$ through
$\mathbf{v}_{n-1}$ with $\mathbf{e}_j$ for $j\neq i$ and $j\neq i\pm 1$.  From
\eqref{eq:compatibility_req}, we have $\mathbf{v}^{\rm T}_1W^{\rm
  f}_i\mathbf{v}_1 = w_{i,i} = 0$, $\mathbf{v}^{\rm T}_2W^{\rm
  f}_i\mathbf{v}_1 = w_{i-1,i} + w_{i,i+1} = 0$, and $\mathbf{v}^{\rm
  T}_2W^{\rm f}_i\mathbf{v}_2 = w_{i-1,i-1} + 2w_{i-1,i+1} +
w_{i+1,i+1} = 0$.  Simplifying \eqref{eq:general_left}, we then see a
restricted form of
\begin{align*}
W^{\rm f}_i = w_{i-1,i-1}\mathbf{e}_{i-1}\mathbf{e}_{i-1}^{\rm T} & +
w_{i+1,i+1}\mathbf{e}_{i+1}\mathbf{e}_{i+1}^{\rm T} + w_{i-1,i}(\mathbf{e}_{i-1}\mathbf{e}_i^{\rm T} +
\mathbf{e}_{i}\mathbf{e}_{i-1}^{\rm T}- \mathbf{e}_{i}\mathbf{e}_{i+1}^{\rm T} -
\mathbf{e}_{i+1}\mathbf{e}_{i}^{\rm T})\\
& -\frac{1}{2}(w_{i-1,i-1}+w_{i+1,i+1})(\mathbf{e}_{i-1}\mathbf{e}_{i+1}^{\rm T} +
\mathbf{e}_{i+1}\mathbf{e}_{i-1}^{\rm T}). \nonumber
\end{align*}
In order to not break the flux form of the shallow-water equations, we
need the off-diagonal terms in $W^{\rm f}_i$ to vanish, forcing both
$w_{i-1,i+1} = 0$ and $w_{i-1,i-1} = -w_{i+1,i+1}$. In other words,
the only consistent well-balanced discretization in flux form that
uses centered differences for the source derivatives occurs when also
using centered differences for the flux derivative, as in Example
\ref{ex:CenteredFiniteDifferences}.  Even if we were to allow breaking
of the flux form, straightforward calculation shows that we cannot
enforce consistency condition $C_{1,1}$ without also requiring that
$w_{i-1,i+1} = 0$ and $w_{i-1,i-1} = -w_{i+1,i+1}$.
\end{example}

\begin{example}
To extend the above example, we consider the case where the right-hand (source) derivative is given by centered differencing, but the left-hand (flux) derivative is given by second-order upwinding, with
\begin{equation}
W^{\rm f}_i = \frac{1}{2\Delta x}(3\mathbf{e}_{i}\mathbf{e}_{i}^{\rm
  T} -4\mathbf{e}_{i-1}\mathbf{e}_{i-1}^{\rm T}  + \mathbf{e}_{i-2}\mathbf{e}_{i-2}^{\rm T} ).
\end{equation}
Note that $\mathbf{e}_i^{\rm T}\mathbf{w}^{\rm s}_i = 0$, but
$\mathbf{e}_i^{\rm T}W^{\rm f}_i\mathbf{e}_i = 3/(2\Delta x)$.
Thus, Theorem \ref{thm:wellbalanced} states that no possible choice of
$\mathbf{m}_i$ exists that yields a well-balanced scheme.
\end{example}

\begin{example}
We now consider the case where both derivatives are given by
second-order upwinding, with 
\begin{equation}
W^{\rm f}_i = \frac{1}{2\Delta x}(3\mathbf{e}_{i}\mathbf{e}_{i}^{\rm
  T} -4\mathbf{e}_{i-1}\mathbf{e}_{i-1}^{\rm T}  + \mathbf{e}_{i-2}\mathbf{e}_{i-2}^{\rm T} ),
\end{equation}
and
\begin{equation}
\mathbf{w}^{\rm s}_i = \frac{1}{2\Delta x}(3\mathbf{e}_{i}
-4\mathbf{e}_{i-1}  + \mathbf{e}_{i-2}).
\end{equation}
Note that we can naturally take $\mathbf{v}_1 = \mathbf{e}_i +
\mathbf{e}_{i-1} + \mathbf{e}_{i+1}$, and can find $\mathbf{v}_2$
orthogonal to both $\mathbf{w}^{\rm s}_i$ and $\mathbf{v}_1$ by taking
the cross-product of the two three-dimensional restrictions of these
vectors, giving $\mathbf{v}_2 = 5\mathbf{e}_i + 2\mathbf{e}_{i-1} -
7\mathbf{e}_{i-2}$.  By construction, $\mathbf{v}_2^{\rm
  T}\mathbf{w}^{\rm s}_i = 0$, but $\mathbf{v}_2^{\rm T}W^{\rm
  f}_i\mathbf{v}_2 = 108/(2\Delta x) \neq 0$.  Thus, Theorem
\ref{thm:wellbalanced} again states that no possible choice of
$\mathbf{m}_i$ exists that yields a well-balanced scheme for these
choices.
\end{example}

\red{
The last two examples raise the question of whether higher-order
well-balanced schemes are possible.  This is easily addressed as a
consequence of Equation
\eqref{eq:ConditionsOnWellBalancedDerivativesMatrix}.}

\red{
\begin{theorem}
Let the differentiation tensor, $\mathbf{W}^{\rm f}$, be given.  If
there exists an $i$ such that $W_i^{\rm f}$ is a diagonal matrix with
more than 2 nonzero entries, then no well-balanced scheme exists.
\end{theorem}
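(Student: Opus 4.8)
The plan is to settle this by a rank count applied to Equation~\eqref{eq:ConditionsOnWellBalancedDerivativesMatrix}. Recall that the derivation preceding Theorem~\ref{thm:wellbalanced} establishes \eqref{eq:ConditionsOnWellBalancedDerivativesMatrix} as a \emph{necessary} condition: if any choice of source-derivative matrix $W^{\rm s}$ and averaging matrix $M$ produces a well-balanced scheme, then at every nodal point $i$ the slice $W^{\rm f}_i$ must admit the factorization $W^{\rm f}_i = \mathbf{m}_i(\mathbf{w}^{\rm s}_i)^{\rm T} + \mathbf{w}^{\rm s}_i\mathbf{m}_i^{\rm T}$ (the symmetric parts of the two sides of the identity $\tfrac12\mathbf{h}^{\rm T}W^{\rm f}_i\mathbf{h} = \mathbf{h}^{\rm T}\mathbf{m}_i(\mathbf{w}^{\rm s}_i)^{\rm T}\mathbf{h}$ must agree for all $\mathbf{h}$). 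So I would argue by contradiction: assume a well-balanced scheme exists, fix the index $i$ at which $W^{\rm f}_i$ is diagonal with at least three nonzero entries, and derive an impossibility at that node alone.

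The crux is that the right-hand side of \eqref{eq:ConditionsOnWellBalancedDerivativesMatrix} is a sum of two rank-one matrices and hence has rank at most two, by the elementary inequality $\mathrm{rank}(A+B)\leqslant\mathrm{rank}(A)+\mathrm{rank}(B)$. One can say a bit more, since $\mathbf{h}^{\rm T}W^{\rm f}_i\mathbf{h} = 2(\mathbf{m}_i^{\rm T}\mathbf{h})((\mathbf{w}^{\rm s}_i)^{\rm T}\mathbf{h})$ is a product of two linear functionals: $W^{\rm f}_i$ is either of rank at most one (when $\mathbf{m}_i$ and $\mathbf{w}^{\rm s}_i$ are parallel) or of rank exactly two with one positive and one negative eigenvalue; but the crude bound already suffices. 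On the other side, a diagonal matrix with nonzero diagonal entries in positions $j_1,j_2,j_3$ sends the unit vectors $\mathbf{e}_{j_1},\mathbf{e}_{j_2},\mathbf{e}_{j_3}$ to nonzero multiples of themselves, so its column space has dimension at least three. A matrix cannot simultaneously have rank at most two and rank at least three, which is the desired contradiction.

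I do not anticipate any real obstacle here; the only subtlety is the logical one of invoking the necessity of \eqref{eq:ConditionsOnWellBalancedDerivativesMatrix} (rather than the full equivalence in Theorem~\ref{thm:wellbalanced}), so that the conclusion rules out \emph{every} pair $(W^{\rm s},M)$ and not merely a particular one. It is worth recording in a remark that this is exactly what goes wrong in the last two examples and, more generally, whenever one insists on the classical choice of a diagonal $W^{\rm f}_i$: such a scheme can be well-balanced only when the flux-derivative stencil at $x_i$ has at most two nonzero weights, which is precisely why genuinely higher-order well-balanced schemes require the bilinear-form (differentiation tensor) generalization of $\mathrm{D}^{\rm f}_x h^2$.
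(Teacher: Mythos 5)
Your proposal is correct and follows essentially the same route as the paper's proof: both rest on the observation that Eq.~\eqref{eq:ConditionsOnWellBalancedDerivativesMatrix} forces the (symmetric part of the) slice $W^{\rm f}_i$ to be a sum of two rank-one matrices, hence of rank at most two, while a diagonal matrix with more than two nonzero entries has rank greater than two. Your version is in fact slightly more careful than the paper's, which loosely says ``rank-two'' where ``rank at most two'' is meant, and your explicit appeal to the necessity (rather than the full equivalence) of \eqref{eq:ConditionsOnWellBalancedDerivativesMatrix} is the right logical framing.
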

\begin{proof}
Equation \eqref{eq:ConditionsOnWellBalancedDerivativesMatrix} states
that a scheme is well-balanced if and only if the symmetric part of
$W_i^{\rm f}$ is a rank-two matrix for all $i$.  When $W_i^{\rm f}$ is
diagonal, then it is its own symmetric part.  The rank of a diagonal
matrix equals the number of nonzero entries in the matrix.  Thus, if
more than two nonzero entries appear in such a $W_i^{\rm f}$, the
scheme cannot be well-balanced.
\end{proof}
}

\red{
This result highlights another asymmetry in the construction of
well-balanced schemes, that the flux derivative cannot be freely
prescribed.  In particular, for flux form discretizations, where
$W_i^{\rm f}$ is constrained to be diagonal, no higher-order
finite-difference stencil can be accommodated under the restriction of
only two nonzero weights.  In constrast, Equation
\eqref{eq:ConditionsOnWellBalancedDerivativesMatrix} and Theorem
\ref{thm:A+B=>C} state that for {\it any} choice of source derivative,
$W^{\rm s}$, and averaging matrix, $M$, a well-balanced scheme can be
defined, inheriting the lower of the consistency orders of $W^{\rm s}$
and $M$, albeit with no expectation that $W_i^{\rm f}$ be in flux
form.  Thus, of the possible ways to complete a well-balanced scheme,
we exclusively adopt this latter one, which allows us to make free
choices of $W^{\rm s}$ and $M$, prescribing a well-balanced scheme
via Equation \eqref{eq:ConditionsOnWellBalancedDerivativesMatrix}.
}

\begin{remark}\label{rem:OnMonomial}
It follows from the first condition in~\eqref{eq:ConditionsOnWellBalancedDerivatives} that well-balanced shallow-water equation discretizations need to employ derivative approximations that are exact for constants. Thus, if the RBF-FD or global RBF collocation methodology is invoked, the underlying RBF interpolant for the field function~$f$ should be of the form $f(x)=\sum_{i=1}^n\alpha_i\phi(||x-x_i||)+\alpha_{n+1}$ with the constraint that $\sum_{i=1}^n\alpha_i=0$. In other words, the RBF basis should be supplemented with the monomial~$\{1\}$. \red{Higher-order polynomials can be included in the basis as well for accuracy considerations, see, e.g.,~\cite{bayo17a}. Such higher-order polynomials may play an important role for the accurate representation of more complicated, non-stationary solutions of the shallow-water equations.}
\end{remark}

\begin{remark}
 It is well-known that the application of RBF-FD methods to purely
 convective PDEs is prone to numerical instabilities since the
 eigenvalues of the differentiation matrices tend to scatter to the
 right half of the complex plane. As a remedy, the inclusion of
 hyperviscosity was proposed~\cite{forn11a}, which allows shifting the
 eigenspectrum of the convective operators back into the left-half
 plane, thus allowing for the use of explicit time-stepping
 methods. We note here that adding hyperviscosity in the momentum
 equations (specifically, terms like $\Delta^k(uh)$, where $\Delta^k$
 is the $k$-th power of the Laplacian operator) to the above
 well-balanced schemes is perfectly possible without tampering with
 the well-balanced property for the lake at rest solution (where $u=0$).
\end{remark}

\begin{remark}
  \red{ As mentioned above, we note that the extension of these
    results to the two-dimensional case follows simply by applying the
    one-dimensional results twice, independently in each coordinate
    direction.  This follows from substituting the lake-at-rest
    solution into Equation \eqref{eq:ShallowWaterEqs}, yielding two
    independent conditions, that $\frac{1}{2}\frac{\partial
      h^2}{\partial x} = h\frac{\partial h}{\partial x}$ and
    $\frac{1}{2}\frac{\partial h^2}{\partial y} = h\frac{\partial
      h}{\partial y}$.  Thus, no cross-derivative terms or other
    coupling arises in the development of well-balanced schemes in two
    dimensions.}
\end{remark}

\section{Numerical simulations}\label{sec:NumericalSimulationsShallowWater}

In this section, we present some numerical verification for the above theoretical construction of well-balanced schemes for the shallow-water equations in the one- and two-dimensional case. 

\subsection{One-dimensional lake at rest solution}
\label{ssec:1Dlake}

We solve the shallow-water equations using either centered finite
differences with the averaging rule as defined in
Example~\ref{ex:CenteredFiniteDifferences}, or with the RBF-FD
method. \red{In the latter case, we exclusively use the multiquadric RBF, i.e.\ $\phi(r)=\sqrt{1+(\epsilon r)^2}$, augmented with the monomial~$\{1\}$ (see Remark~\ref{rem:OnMonomial}), where we set the shape
parameter~$\epsilon=0.1$ in all experiments.} The stencil size of the
RBF-FD method is three (center point and the immediate neighbors to
the left and to the right) and the averaging rule is a normalized
Gaussian filter, \red{with weights given by $C_ie^{-|x_j-x_i|}$ assigned at
all points, $j$, that appear in the stencil for point $i$, and
constant $C_i$ chosen so these weights sum to 1}. The discretization for the flux
derivative~$\tfrac12\mathrm{D}_x h^2$ is then computed using the
condition~\eqref{eq:ConditionsOnWellBalancedDerivativesMatrix}.  \red{We note the
  flexibility in the framework defined above allows us to
  independently choose the source derivative and averaging rule, and
  this approach will always yield a well-balanced scheme.} As
time stepping, we use the Heun scheme. A total of $n=100$ \red{equally
  spaced} points are
used on the domain $\Omega=[-3,3]$ where reflective boundary
conditions were employed. The bottom topography is a cosine bump of
amplitude $A=7$ extending over the interval $[-1,1]$, which is
superimposed with white noise generated independently at each node using normally distributed random numbers with zero mean and unit variance. The initial total water height is $h_0+b=10$.

In Figure~\ref{fig:RBF1D} we show the results of the numerical
computations at $t=10$ using the RBF-FD method. The results of the
classical centered finite difference scheme presented in Example~\ref{ex:CenteredFiniteDifferences} are essentially the same,
with slightly smaller errors overall, and are, hence, not displayed
here. 

\red{Note that for irregular nodal layouts, the eigenvalues of the derivative matrices for the RBF-FD method scatter into the right
half of the complex plane.  This is particularly prominent in the
multidimensional case and in the case that several neighboring nodal
points are used for the computation of the derivative matrices. To
improve the stability of the numerical schemes in these cases,
hyperviscosity or other stabilization should be used.}

\begin{figure}[!ht]
    \centering
    \begin{subfigure}[b]{0.45\textwidth}
            \centering
            \includegraphics[width=\textwidth]{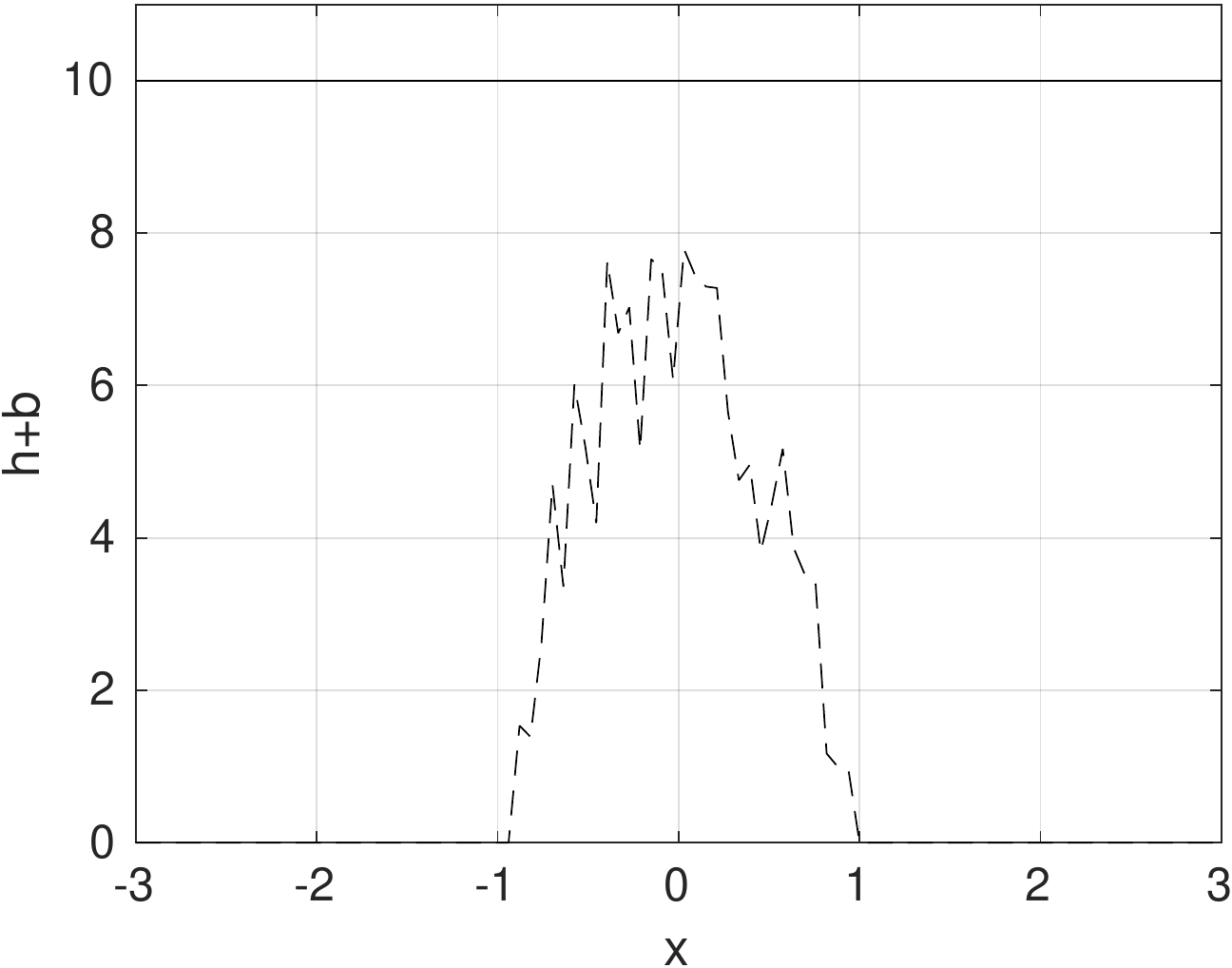}
    \end{subfigure}\hspace{1cm}
\begin{subfigure}[b]{0.45\textwidth}
            \centering
            \includegraphics[width=\textwidth]{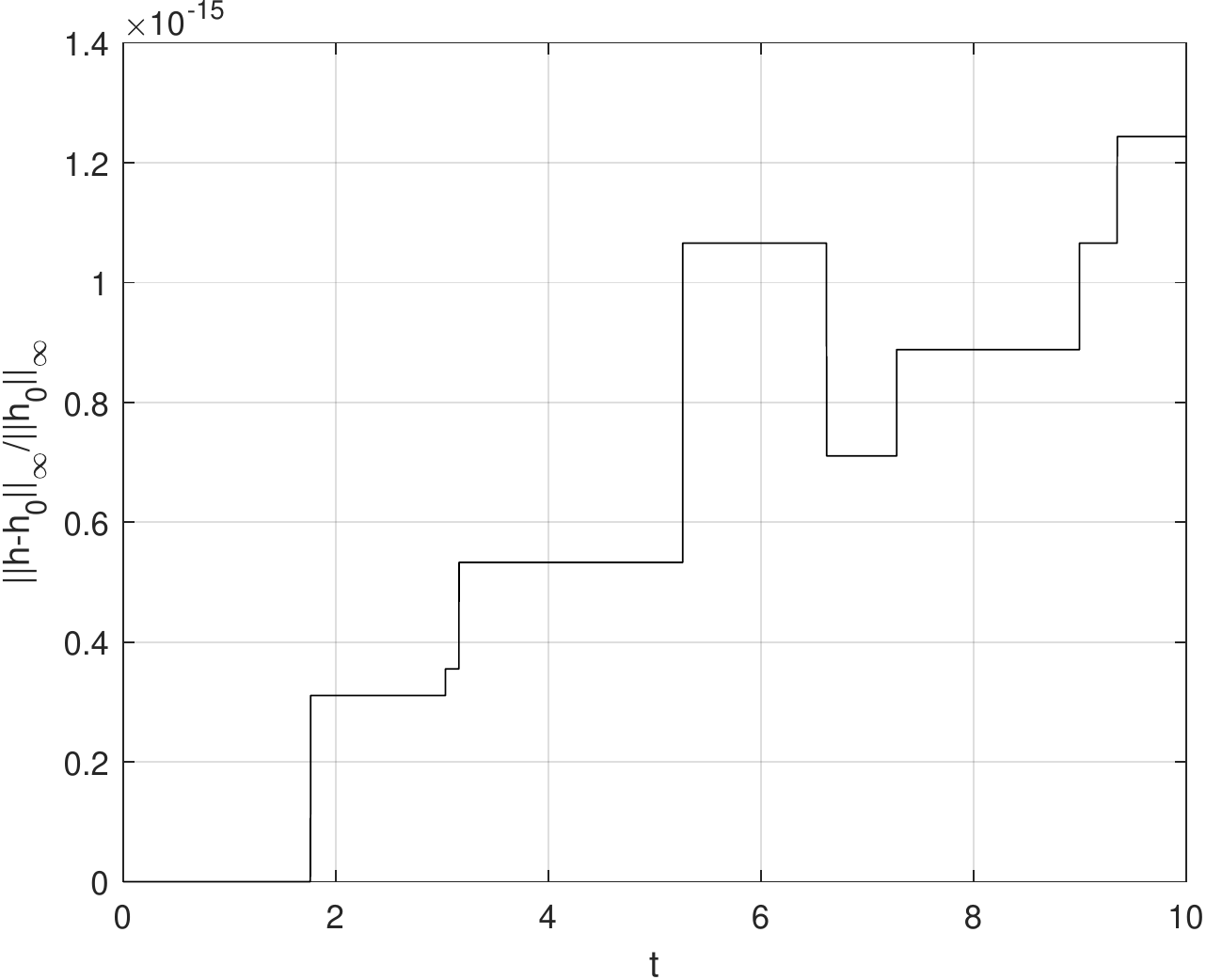}
    \end{subfigure}
    \caption{Numerical integration of the shallow-water equations
      using the \textit{balanced} RBF-FD method on $n=100$ regularly
      spaced nodes, integrated up to $t=10$ with the Heun
      scheme. \textbf{Left:} Total water height at $t=10$ (solid line)
      and bottom topography (dashed line). \textbf{Right:} Relative $l_\infty$-error in the water height.}
    \label{fig:RBF1D}
\end{figure}

Figure~\ref{fig:RBF1D} shows that the RBF-FD scheme is indeed well-balanced, being able to maintain the constant water height even in the presence of quite rough bottom topography. We also monitored the conservation of total mass $\mathcal M=\sum_{i}h_i\Delta x_i$ and found conservation with relative errors of the magnitude $10^{-16}$ and hence machine precision (not shown here). The conservation of mass for the lake at rest solution is a particularly nice feature of the present well-balanced schemes as it is straightforward to check that mass is in general not conserved in numerical schemes for the shallow-water equations using the RBF-FD methodology.

In contrast, Figure~\ref{fig:RBF1Dbad} depicts the results obtained
when the standard RBF-FD approximation is used for both the source
and flux derivatives (i.e., not employing the well-balanced condition
derived above). As expected, the violation of balance leads to the
emergence of spurious waves that travel through the entire
computational domain. The emergence of these waves is typically not
tolerable in numerical schemes for the shallow-water equations, as
they can lead to wrong run-up heights and, thus, to unphysical results
estimating factors such as tsunami inundation or the stress on coastal structures. 

\begin{figure}[!ht]
    \centering
    \begin{subfigure}[b]{0.45\textwidth}
            \centering
            \includegraphics[width=\textwidth]{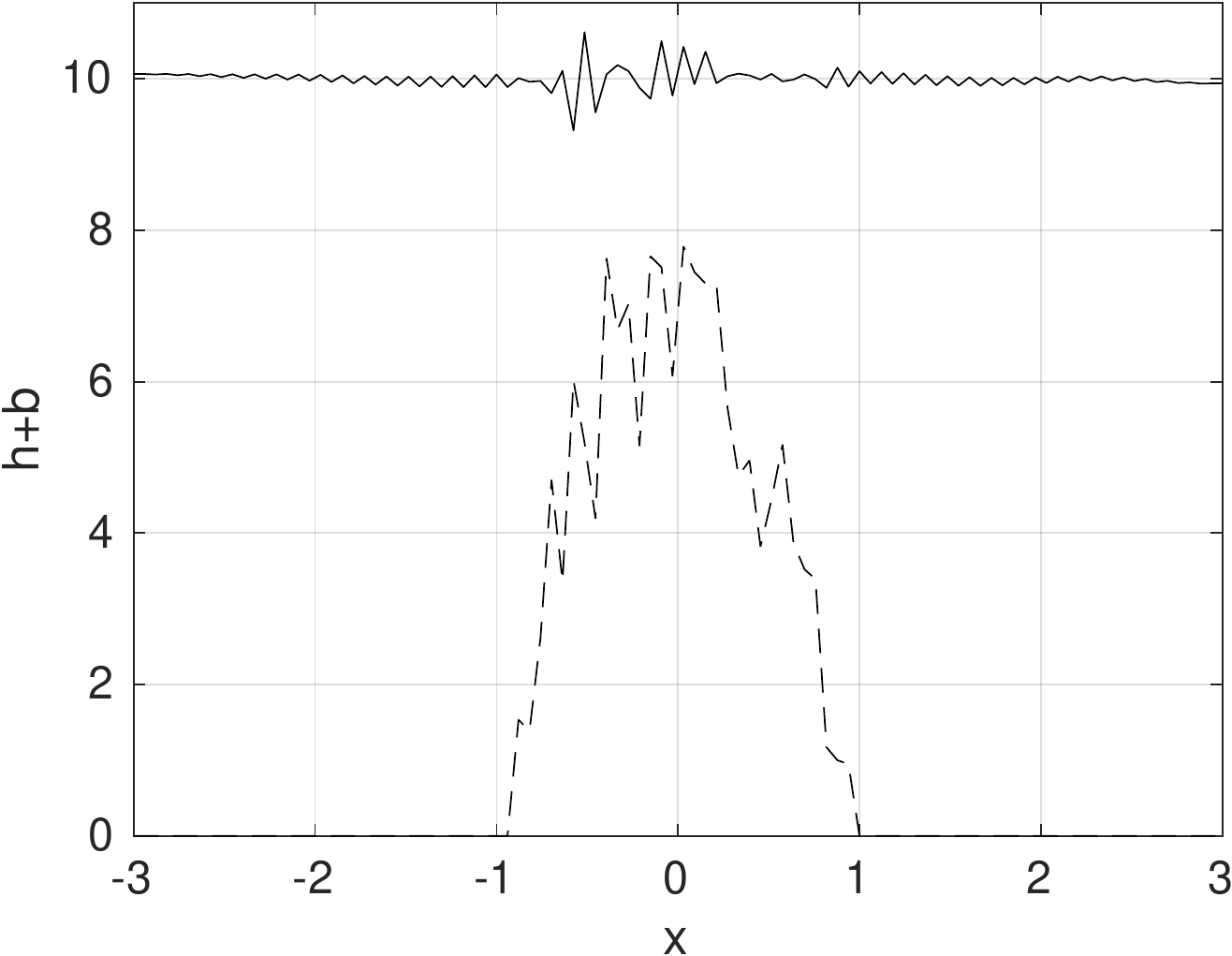}
    \end{subfigure}\hspace{1cm}
\begin{subfigure}[b]{0.45\textwidth}
            \centering
            \includegraphics[width=\textwidth]{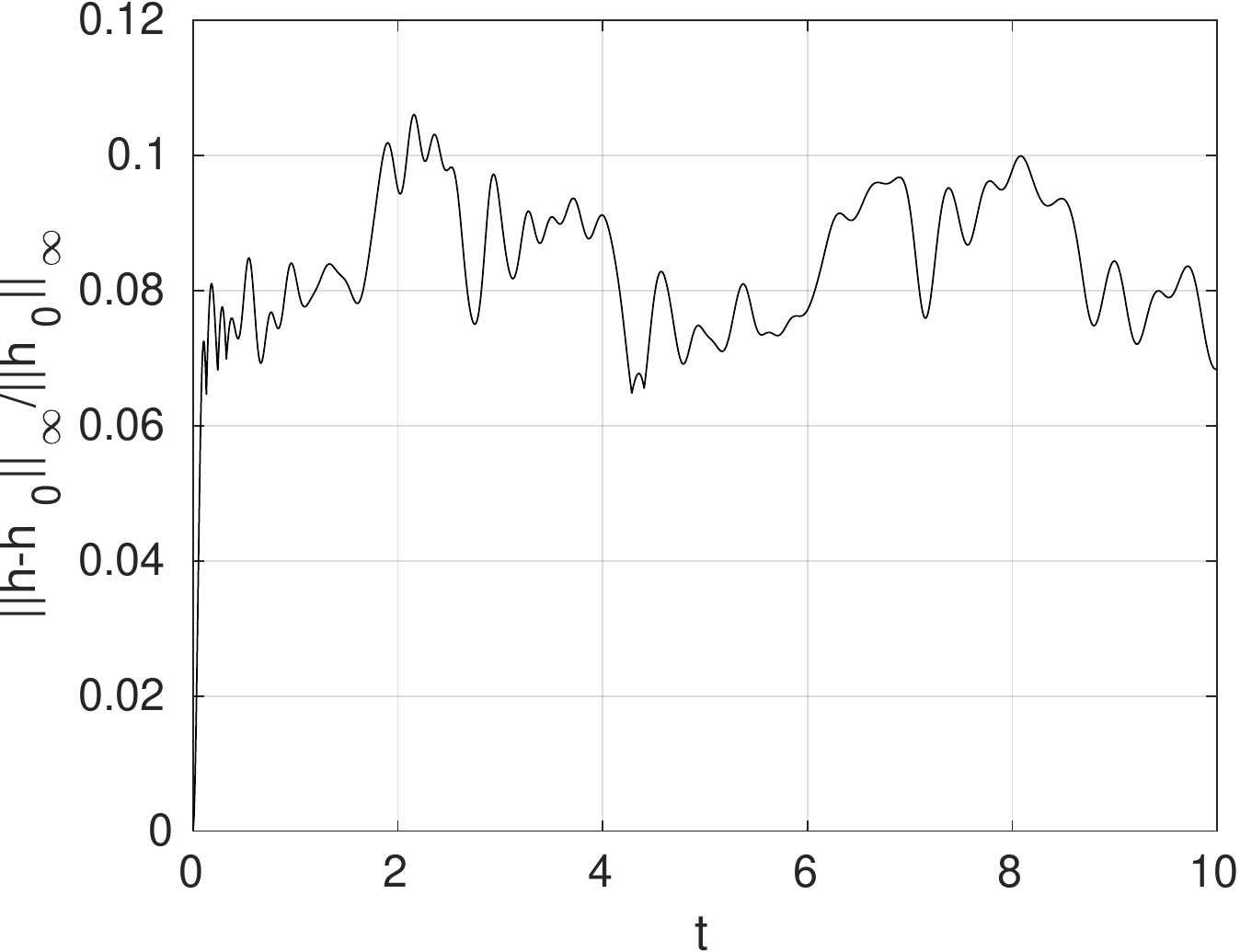}
    \end{subfigure}
    \caption{Numerical integration of the shallow-water equations
      using the \textit{unbalanced} RBF-FD method on $n=100$ regularly
      spaced nodes, integrated up to $t=10$ with the Heun
      scheme. \textbf{Left:} Total water height at $t=10$ (solid line)
      and bottom topography (dashed line). \textbf{Right:} Relative $l_\infty$-error in the water height.}
    \label{fig:RBF1Dbad}
\end{figure}

\subsection{Two-dimensional lake at rest solution}

In the two-dimensional case, we constrain ourselves to the use of the RBF-FD method only. We consider the domain $\Omega=[-3,3]\times[-3,3]$ covered by $n=1600$ nodes. To demonstrate the versatility and independence of the chosen nodal layout (mesh-based or meshfree) of the condition~\eqref{eq:ConditionsOnWellBalancedDerivativesMatrix}, we add $(0.1\Delta x \mathcal N(0,1),0.1\Delta y \mathcal N(0,1))$ as disturbance to each \red{nodal point} originally lying on an orthogonal and equally spaced mesh, where $\mathcal N(0,1)$ is a normally distributed random variable with zero mean and variance one. The resulting nodal layout is depicted in Figure~\ref{fig:NodalLayout}.

\begin{figure}[!ht]
\centering
\includegraphics[scale=0.5]{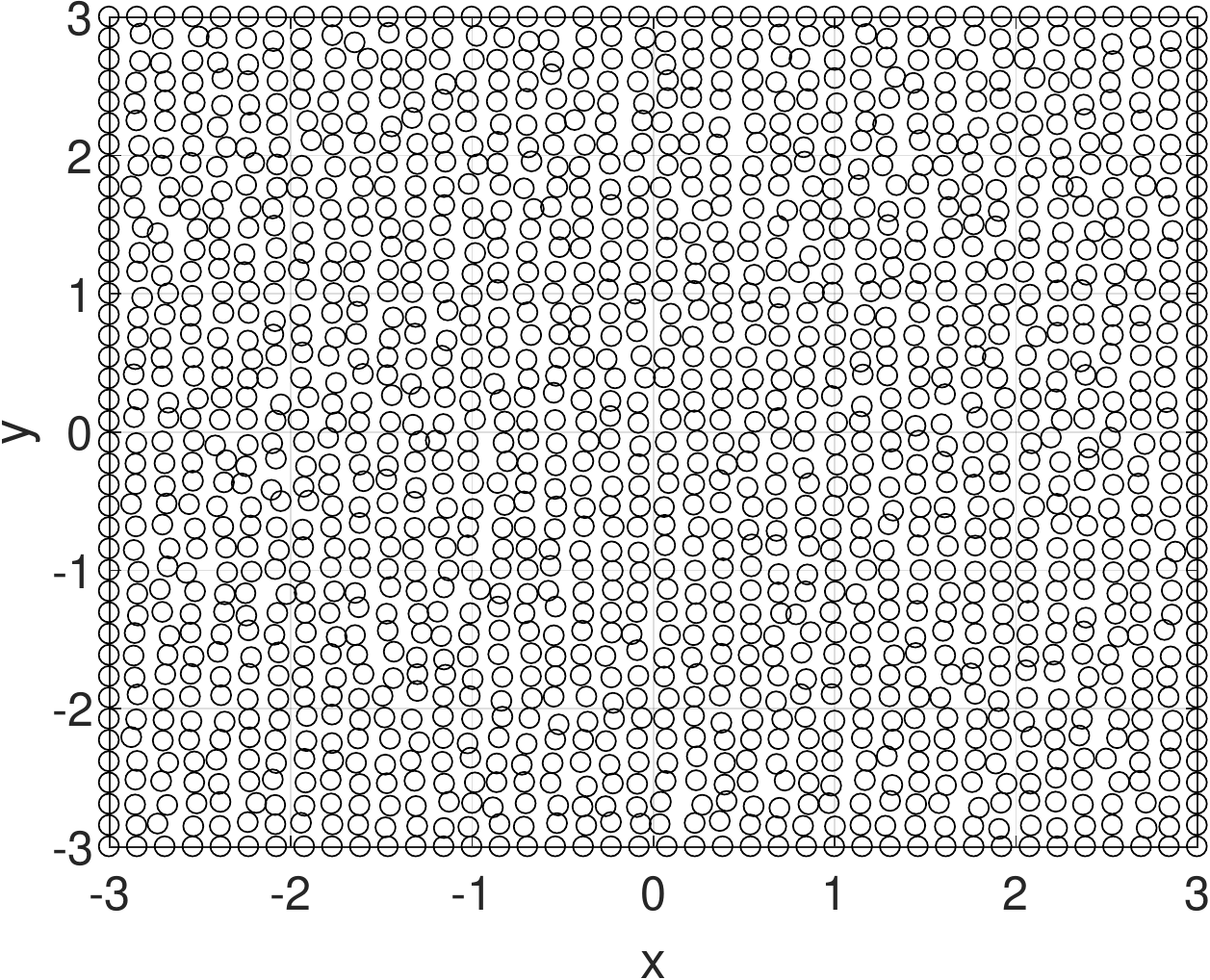}
\caption{Nodal distribution for the balanced 2D shallow-water equations discretization.}
\label{fig:NodalLayout}
\end{figure}

The bottom topography used is a cosine bell on the area
$[-1,1]\times[-1,1]$ with amplitude $A=7$ and again superimposed with
white noise obtained independently at each node from normally distributed random numbers with zero mean and unit variance.

We choose a stencil based on the 25 nearest neighbors of each point in the domain~$\Omega$ for the computation of the RBF-FD differentiation matrices. Once again, the multiquadric RBF is used in all computations and while the current nodal layout might profit from a spatially variable shape parameter~$\epsilon$ for improved accuracy, \red{for the sake of simplicity we used $\eps=1$ in all points.} The averaging rule is a two-dimensional Gaussian filter over the 25 points in the stencil of each point. Once again, the flux derivative discretizations for~$\tfrac12\mathrm{D}_xh^2$ and $\tfrac12\mathrm{D}_yh^2$ are obtained from the condition~\eqref{eq:ConditionsOnWellBalancedDerivativesMatrix}. We integrate the two-dimensional shallow-water equations with the Heun scheme up to $t=10$. The results of this integration are displayed in Figure~\ref{fig:RBF2D} and verify numerically that the scheme is indeed well-balanced.

\begin{figure}[!ht]
    \centering
    \begin{subfigure}[b]{0.45\textwidth}
            \centering
            \includegraphics[width=\textwidth]{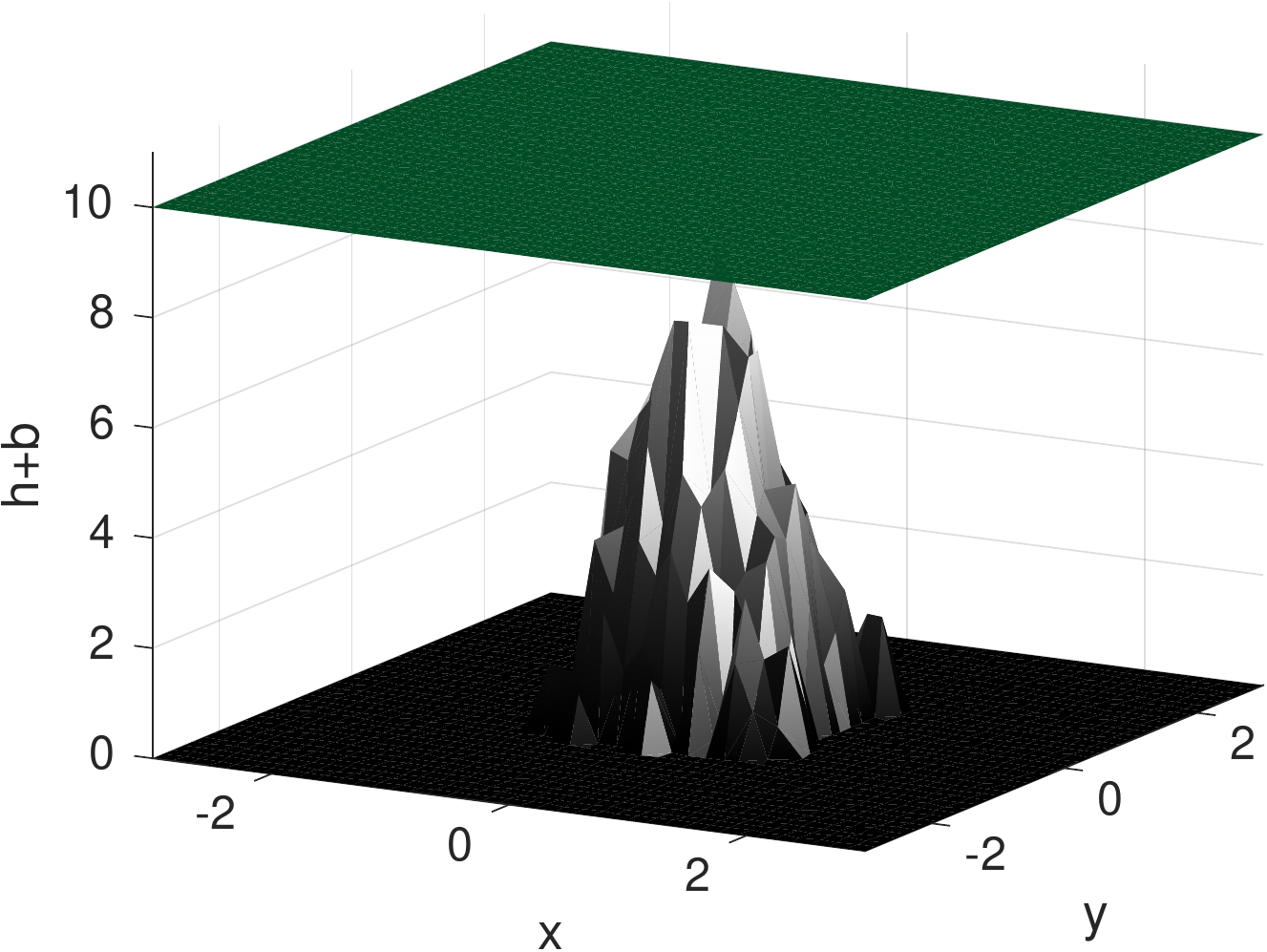}
    \end{subfigure}\hspace{1cm}
\begin{subfigure}[b]{0.45\textwidth}
            \centering
            \includegraphics[width=\textwidth]{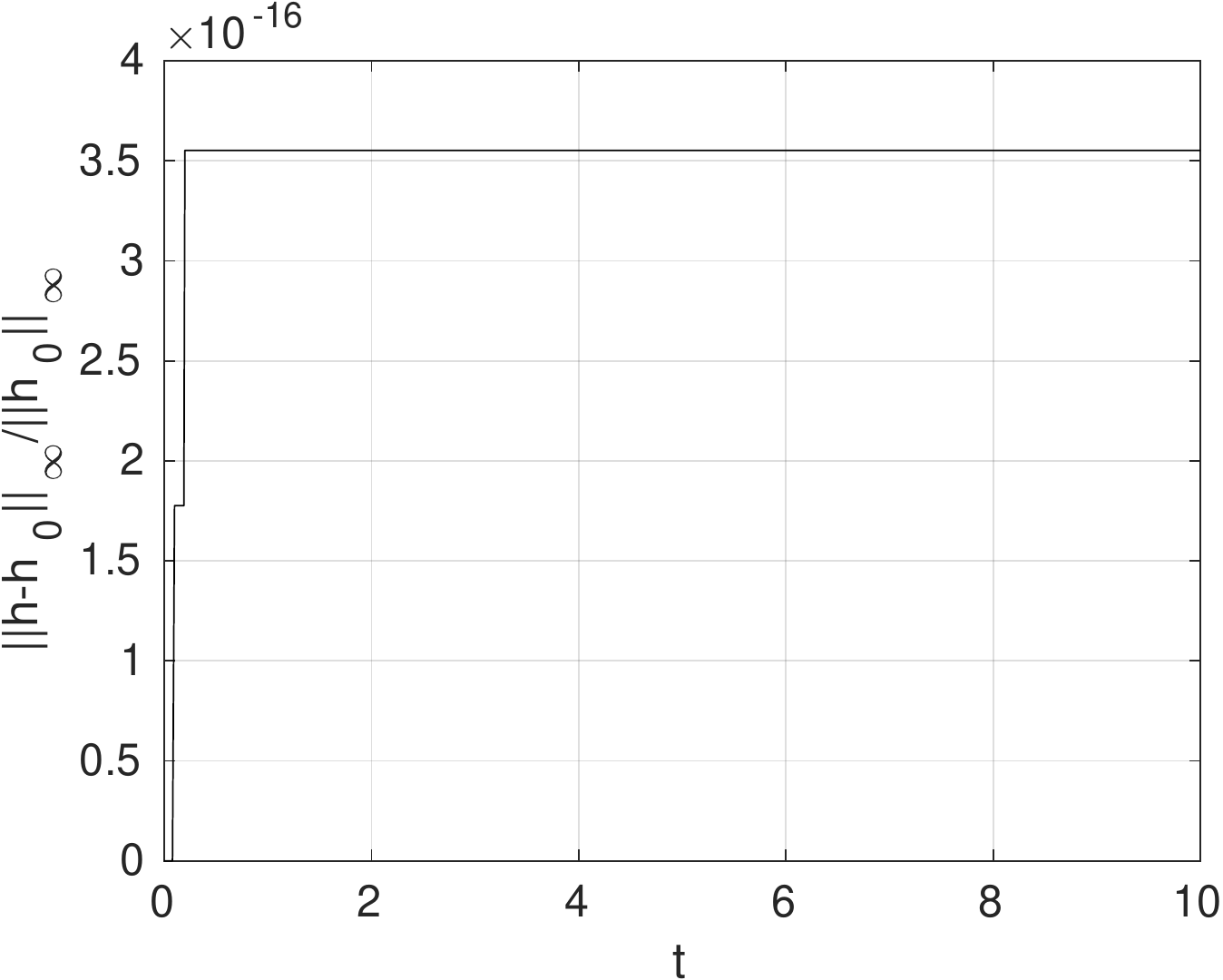}
    \end{subfigure}
    \caption{Numerical integration of the shallow-water equations using the RBF-FD method on $n=1600$ irregularly spaced nodes, integrated up to $t=10$ with the Heun scheme. \textbf{Left:} Total water height at $t=10$ and bottom topography. \textbf{Right:} Relative $l_\infty$-error in the water height.}
    \label{fig:RBF2D}
\end{figure}

Stabilization of the scheme was done by adding hyperviscosity of the
form $-(1)^{k+1}\nu\Delta^k (uh)$ and $-(1)^{k+1}\nu\Delta^k (vh)$ to
the momentum equations in the $x$- and $y$-directions,
respectively. We chose $k=2$ and experimentally set $\nu$ such that
the scheme remains stable but does not become unnecessarily
diffusive. Note that similar results were obtained when refining the
grid (i.e., starting from a finer uniform mesh but performing the same
random perturbations to both node location and bottom topography). In
this case, we replace the fixed hyperviscosity parameter, $\nu$, by $\nu(\Delta r)^{2k}$, where $\Delta r$ is a measure of the average nodal distance.  

\subsection{Parabolic bowl}
\red{
Having verified the numerical preservation of the lake at rest
solution, it is instructive to compare the numerical solution of the
balanced scheme to that of an unbalanced scheme for a more
challenging test case. Here, we consider oscillatory flow in a parabolic bowl, in the same setting as presented in~\cite{vate15a}, which is based on the exact solution derived in~\cite{thac81a}. In particular, we use the domain
$\Omega=[-5000,5000]$ with parabolic bottom topography $b=h_0(x/a)^2$,
where $a=3000$ and $h_0=10$. The initial conditions are such that the
exact solution to this benchmark test is given by
\begin{align*}
 &h_{\rm a}(t,x)=h_0-\frac{B^2}{4g}(1+\cos2\omega t)-\frac{Bx}{2a}\sqrt{\frac{8h_0}{g}}\cos(\omega t),\quad
 u_{\rm a}(t,x)=\frac{Ba\omega}{\sqrt{2h_0g}}\sin\omega t
\end{align*}
where $\omega=\sqrt{2gh_0}/a$ and $B=5$.
}

\red{
Using the three-point RBF-FD and Gaussian filter averaging rule
described above in Section \ref{ssec:1Dlake}, we integrate the
shallow-water equations numerically until $t=2000$.  We consider three
measures of the error for varying numbers of nodal points, $n$:
\begin{enumerate}
\item the maximum error in conservation of the total mass,
  $\mathcal{M}$, over all time steps,
\item the error in water height $h$, measured by taking the relative
  error in $h$ (measured in the maximum norm) at each time step, and
  measuring the maximum of these values over all time steps, and
  \item the error in momentum, $hu$, measured by taking the absolute
    error in $hu$ (measured in the maximum norm) at each time step,
    and measuring the maximum of these values over all time steps.
\end{enumerate}
Note that this solution requires an inundation model, since the water
surface hits the bowl and, thus, creates a moving boundary
condition. Inundation is not considered here, and we use the
analytical solution to prescribe the time-varying boundary
condition. For the discussion of a possible inundation model for this
case, consult~\cite{brec17a}. The results of this convergence study
are reported in Table~\ref{tab:ConvergenceStudyParabolicBowl}, showing
that, as expected, the balanced scheme is consistently better than the
unbalanced scheme, both being of second order.  The errors differ
most dramatically (by about one order of magnitude) for mass
conservation.
}

\begin{table}[!ht]
\renewcommand{\arraystretch}{1.4}
\centering
\caption{\footnotesize{Error in mass conservation $\mathcal M$, relative $l_\infty$-error for the total water height $h$, absolute $l_\infty$-error for the momentum $hu$ for the balanced and unbalanced schemes for the oscillatory flow in a parabolic bowl.}}
\begin{tabular}{|c||c|c||c|c||c|c|}
  \hline
  & \multicolumn{2}{|c||}{$\mathcal M$ error} &
  \multicolumn{2}{|c||}{$h$ error} & \multicolumn{2}{|c|}{$hu$
    error}\\
  \hline
$n$ & balanced & unbalanced & balanced & unbalanced & balanced & unbalanced\\
\hline
128 & $1.64\cdot 10^{-4}$ & $1.48\cdot 10^{-3}$ & $9.75\cdot10^{-4}$ & $2.44\cdot 10^{-3}$ & $8.43\cdot 10^{-2}$ & $1.91\cdot 10^{-1}$ \\
256 & $7.06\cdot 10^{-5}$ & $3.37\cdot 10^{-4}$ & $2.68\cdot 10^{-4}$ & $8.74\cdot 10^{-4}$ & $2.20\cdot 10^{-2}$ & $8.74\cdot 10^{-2}$ \\ 
512 & $8.21\cdot 10^{-6}$ & $7.90\cdot 10^{-5}$ & $7.20\cdot 10^{-5}$ & $1.77\cdot 10^{-4}$ & $5.38\cdot 10^{-3}$ & $1.31\cdot 10^{-2}$ \\
1024 & $2.43\cdot 10^{-6}$ & $1.77\cdot 10^{-5}$ & $1.88\cdot 10^{-5}$ & $6.26\cdot 10^{-5}$ & $1.36\cdot 10^{-3}$ & $3.3\cdot 10^{-3}$\\
\hline
\end{tabular}
\label{tab:ConvergenceStudyParabolicBowl}
\end{table}

\red{ For a second experiment, we consider fourth-order schemes and
  extend the comparison to include both a standard FD scheme and an
  RBF-FD scheme as described above.  Since we continue to use the Heun
  scheme for time stepping, we now decrease the time step by a factor
  of four each time the number of points is space is doubled, in order
  to balance the errors between the second-order time stepper and the
  fourth-order spatial discretizations.  We use uniform grids in
  space, and integrate to $t=1000$; for $n=64$ points in space, we use
  250 points in the time direction, which approximately balances the
  spatial and temporal discretization errors at this spatial mesh
  size.  For the standard FD scheme, we use the fourth-order
  (five-point) central difference stencil for the source derivative
  terms, the identity operator for the averaging rule, and the
  well-balanced prescription in
  \eqref{eq:ConditionsOnWellBalancedDerivativesMatrix} for the flux
  derivative.  For the RBF-FD scheme, we also use a five-point
  discretization for the source derivative, following the description
  in Section \ref{subsec:DerivativeApproximations}, but now including
  polynomials up to third order in the construction of the
  differencing scheme.  To achieve a nontrivial fourth-order averaging
  rule, we use a five-point averaging, with weights of $1.6$ for the
  point itself, $-0.4$ for the two immediate neighbours, and $0.1$ for
  the two distance-two neighbours.  We note that the normalized
  Gaussian filter used above cannot yield a fourth-order averaging, as
  some negative weights must appear in the averaging rule to attain
  fourth order.  We consider this scheme in both balanced form,
  following \eqref{eq:ConditionsOnWellBalancedDerivativesMatrix} to
  prescribe the flux derivative, and unbalanced form, directly using
  the fourth-order RBF-FD derivative for the flux term.  Numerical
  results are given in Table \ref{tab:ParabolicBowlFourthOrder}.  We
  note slight differences in the errors from the FD and balanced
  RBF-FD schemes, but these are small overall.  In comparison with the
  unbalanced RBF-FD scheme, however, we see notably larger errors in
  height and momentum, by up to a factor of three over the balanced
  schemes.  Most notably, comparing results for $n=256$ and $n=512$,
  we see reductions in both height and momentum error by factors of
  almost 15 for the two balanced schemes (consistent with
  fourth-order discretization), but only by a factor of 7 or 8 for the
  unbalanced scheme.  Taken together with the results for the
  lake at rest solution, these results indicate the advantage of
  choosing a well-balanced scheme for the shallow-water equations over
  an unbalanced scheme.  }

\begin{table}[!ht]
\renewcommand{\arraystretch}{1.4}
\centering
\caption{\footnotesize{Error in mass conservation $\mathcal M$,
    relative $l_\infty$-error for the total water height $h$, absolute
    $l_\infty$-error for the momentum $hu$ for the balanced and
    unbalanced schemes for the oscillatory flow in a parabolic bowl
    using fourth-order schemes.}}
\begin{tabular}{|c|c||c|c|c|c|}
  \hline
  & & $n=64$ & $n=128$ & $n=256$ & $n=512$ \\
  \hline
  \multirow{ 3}{*}{$\mathcal{M}$ error} & FD &
  $7.16\cdot 10^{-4}$ & $2.18\cdot 10^{-4}$ & $8.82\cdot 10^{-5}$ &
  $3.22\cdot 10^{-6}$\\
  & RBF-FD bal. &
  $7.33\cdot 10^{-4}$ & $1.66\cdot 10^{-4}$ & $5.57\cdot 10^{-5}$ &
  $5.47 \cdot 10^{-6}$\\
  & RBF-FD unbal. &
  $7.70\cdot 10^{-4}$ & $1.65\cdot 10^{-4}$ & $5.57\cdot 10^{-5}$ &
  $5.48\cdot 10^{-6}$\\
  \hline
  \multirow{ 3}{*}{$h$ error} & FD &
  $2.52\cdot 10^{-4}$ & $1.73\cdot 10^{-5}$ & $1.19\cdot 10^{-6}$ &
  $7.97\cdot 10^{-8}$\\
  & RBF-FD bal. &
  $2.87\cdot 10^{-4}$ & $2.09\cdot 10^{-5}$ & $1.45\cdot 10^{-6}$ &
  $9.84\cdot 10^{-8}$\\
  & RBF-FD unbal. &
  $2.72\cdot 10^{-4}$ & $2.61\cdot 10^{-5}$ & $1.68\cdot 10^{-6}$ &
  $2.06\cdot 10^{-7}$\\
  \hline
  \multirow{ 3}{*}{$hu$ error} & FD &
  $1.69\cdot 10^{-2}$ & $1.11\cdot 10^{-3}$ & $7.24\cdot 10^{-5}$ &
  $4.92\cdot 10^{-6}$\\
  & RBF-FD bal. &
  $1.88\cdot 10^{-2}$ & $1.24\cdot 10^{-3}$ & $8.15\cdot 10^{-5}$ &
  $5.56\cdot 10^{-6}$\\
  & RBF-FD unbal. &
  $1.48\cdot 10^{-2}$ & $2.35\cdot 10^{-3}$ & $1.01\cdot 10^{-4}$ &
  $1.52\cdot 10^{-5}$\\
  \hline
\end{tabular}
\label{tab:ParabolicBowlFourthOrder}
\end{table}

\section{Conclusion}\label{sec:ConclusionsShallowWater}

In this paper, we have derived general criteria for obtaining well-balanced numerical schemes for the shallow-water equations that employ a nodal expansion for their spatial derivative approximation. We have shown analytically and numerically that the resulting discretization schemes for the shallow-water equations exactly maintain the lake at rest steady state, which is considered as the first important criterion for applying such schemes to real-world problems such as tsunami modeling. We have further proved consistency and order conditions for the discrete differential and averaging operators involved in these well-balanced schemes.

One particularly important feature of the derived schemes is that they do not require the nodes to lay on a uniform, orthogonal grid. Rather, any nodal distribution can be used and the resulting schemes will remain well-balanced. This feature is important as it guarantees that various adaptation strategies, such as $h$- and $r$-adaptivity \red{(i.e.\ introducing or removing nodes, as well as dynamically redistributing them)}, can be used without complicating the design of the resulting numerical method. Due to the involved time and length scales in the shallow-water equations when used for tsunami modeling (e.g.\ open ocean wave propagation vs.\ coastal inundation), adaptivity is usually a practical necessity~\cite{geor06By}.

The present work is also an important step in the development of mimetic methods for general meshless numerical schemes, in that we have derived criteria that derivative approximations have to mimic in order for the resulting numerical scheme to be well-balanced. More generally, mimetic discretization is an active field of research in which one aims to discretize differential equations is such a way that certain important identities from vector calculus will be preserved in a numerical scheme. This is important since these vector identities are typically associated with central conservation laws in the equations of hydrodynamics and electrodynamics. For an overview of mimetic discretization schemes and some examples, including the shallow-water equations, consult e.g.~\cite{boch06Ay,brez05a,hyma99a,vanr11a,vant12a,vers03Ay}. Most mimetic methods derived so far apply to the finite difference, finite element and finite volume methodologies only and thus exclude the important class of meshless integration schemes. As these schemes are getting increasingly popular in fields such as atmospheric sciences, ocean sciences and geophysics, whose governing equations all admit important conservation laws, finding mimetic schemes in the wider meshless framework is an important timely research field. \red{For first results of this research perspective in geophysics, see~\cite{mart15a,mart17a}. The present study for the shallow-water equations can thus be regarded as being amongst the first examples for mimetic methods within the meshless methodology.}

\section*{Acknowledgments}

This research was undertaken, in part, thanks to funding from the Canada Research Chairs program and the NSERC Discovery Grant program. \red{The authors thank Grady Wright for helpful discussions, and the two anonymous referees for their helpful and considerate remarks.}

\footnotesize\setlength{\itemsep}{0ex}

\end{document}